\pgfplotsset{width=9cm, height=7.2cm, compat=1.9}
\let\NAT@parse\undefined
\tikzset{elegant/.style={smooth,thick,samples=50,cyan}}
\tikzset{eaxis/.style={->,>=stealth}}
\newtheorem{theorem}{Theorem}
\newtheorem{lemma}[theorem]{Lemma}
\newtheorem{proposition}[theorem]{Proposition}
 \newtheorem{remark}[theorem]{Remark}
\pgfplotsset{every axis legend/.style={%
cells={anchor=west},
inner xsep=3pt,inner ysep=2pt,nodes={inner sep=0.8pt,text depth=0.15em},
anchor=north east,%
shape=rectangle,%
fill=white,%
draw=black,
at={(0.98,0.98)},
font=\footnotesize,
}}
\pgfplotsset{every axis/.append style={line width=0.6pt,tick style={line width=0.8pt}}}
\def\BibTeX{{\rm B\kern-.05em{\sc i\kern-.025em b}\kern-.08em T\kern-.1667em\lower.7ex\hbox{E}\kern-.125emX}}
\begin{document}

\title{Resource Optimization with Interference Coupling in Multi-RIS-assisted Multi-cell Systems}

\author{Zhanwei Yu (\IEEEmembership{Student Member, IEEE}) and Di Yuan (\IEEEmembership{Senior Member, IEEE})
\thanks{This work has been supported by the Swedish Research Council under Grant 2018-05247.}
\thanks{Z. Yu and D. Yuan are with Department of Information Technology, Uppsala University, 751 05 Uppsala, Sweden (e-mail: \{zhanwei.yu; di.yuan\}@it.uu.se).}
}

\IEEEtitleabstractindextext{\begin{abstract}Deploying reconfigurable intelligent surface (RIS) to enhance wireless transmission is a promising approach. In this paper, we investigate large-scale multi-RIS-assisted multi-cell systems, where multiple RISs are deployed in each cell. Different from the full-buffer scenario, the mutual interference in our system is not known a priori, and for this reason we apply the load coupling model to analyze this system. The objective is to minimize the total resource consumption subject to user demand requirement by optimizing the reflection coefficients in the cells. The cells are highly coupled and the overall problem is non-convex. To tackle this, we first investigate the single-cell case with given interference, and propose a low-complexity algorithm based on the Majorization-Minimization method to obtain a locally optimal solution. Then, we embed this algorithm into an algorithmic framework for the overall multi-cell problem, and prove its feasibility and convergence to a solution that is at least locally optimal. Simulation results demonstrate the benefit of RIS in time-frequency resource utilization in the multi-cell system.
\end{abstract}

\begin{IEEEkeywords}
Reconfigurable intelligent surface (RIS), multi-cell system, load coupling, resource allocation.
\end{IEEEkeywords}
}

\maketitle

\section{Introduction} \label{Sec:Intro}

\IEEEPARstart{A}{n} array of new technologies that can contribute to beyond 5G (B5G) and the sixth-generation (6G) wireless networks are being proposed and extensively investigated, including massive multiple-input multiple-output (m-MIMO) \cite{BJORNSON20193} or utilizing higher frequency bands such as millimeter wave (mmWave) \cite{mmWave} and even terahertz (THz) \cite{Terahertz}. However, to apply these technologies, it may be necessary to develop new network software and hardware platforms. For example, millions of antennas and baseband units (BBU) need to be deployed to support mmWave and THz technologies. As a low-cost solution, reconfigurable intelligent surface (RIS) has recently drawn significant attention \cite{IRS1,IRS2, IRS3, IRS-RIS}.

The RIS-assisted wireless network not only provides lower cost compared with some other technologies, but also leads to less overhead to the existing wireless systems \cite{wu2020tutorial}. RIS is a planar surface composed of reconfigurable passive printed dipoles connected to a controller. The controller is able to reconfigure the phase shifts according to the incident signal, to achieve more favorable signal propagation. By its nature, RIS provides a higher degree of freedom for data transmission, thus improving the capacity and reliability. In contrast to traditional relay technology, RIS consumes much less energy without amplifying noise or generating self-interference, as the signal is passively reflected. Also, as deploying RIS in a wireless network is modular, it is more suitable for upgrading current wireless systems.

\subsection{Related Works}

\subsubsection{Studies on Characteristics of RIS} The authors of \cite{Emil1} propose a far-field path loss model for RIS by the physical optics method, and explain why RIS can beamform the diffuse signal to the desired receivers. In \cite{9119122, bjornson2019intelligent, huang2019reconfigurable}, the differences and similarities among RIS, decode-and-forward (DF), and amplify-and-forward (AF) relay are discussed. The authors of \cite{bjornson2019intelligent} point out that a sufficient number of reconfigurable elements should be considered to make up for the low reflecting channel gain without any amplification. The authors of \cite{8970580} conduct the performance comparison between non-orthogonal multiple access (NOMA) and orthogonal multiple access (OMA) in the RIS-assisted downlink for the two-users scenario, and reveal that OMA outperforms NOMA for near-RIS user pairing. Additionally, even though the adjustment range of the reflection elements usually is limited in order to reduce the cost in practice, \cite{8746155} shows that it still can guarantee a good spectral efficiency. Furthermore, in \cite{zou2020joint}, RIS powered by wireless energy transmission is considered, and the results show that RIS can work well with wireless charging.

\subsubsection{Studies on Single-cell System with RIS} Inspired by the advantages of RIS, the authors of \cite{wu2019towards} outline several typical use cases of RIS, {e.g.}, improving the experience of the user located in poorly covered areas and enhancing the capacity and reliability of massive devices communications. The authors of \cite{9039554} optimize the reflection elements to improve the rate of the cell-edge users in an RIS-assisted orthogonal frequency division multiplexing access (OFDMA)-based system. In \cite{wu1}, the authors jointly adopt the semidefinite relaxation (SDR) and the alternating optimization (AO) method to improve the spectral and energy efficiency of an RIS-assisted multi-user multi-input single-output (MISO) downlink system. For the same type of system, the authors of \cite{guo2019weighted} propose an algorithm based on the fractional programming technique to improve the capacity. Besides, due to its excellent compatibility, the RIS is able to adapt to various multiple access techniques, e.g., NOMA \cite{9203956, 9240028} and the promising rate splitting multiple access (RSMA) \cite{yang2020energy}. Additionally, a variety of RIS-assisted application are investigate in \cite{9076830, 9110849, 9133107, 8743496, 9133130, Huang_additional}. The works in \cite{CR1, CR2, CR3, CR4, CR5} study RIS-assisted spectrum sharing scenarios. The authors of \cite{CR1} integrate an RIS into a multi-user full-duplex cognitive radio network (CRN) to simultaneously improve the system performance of the secondary network and efficiently reduce the interference from the primary users (PUs). The authors of \cite{CR2} provide an algorithm based on block coordinate descent to maximize the weighted sum rate of the secondary users (SUs) in CRNs subject to their total power. The work in \cite{CR3} investigates two types of CSI error models for the PU-related channels in RIS-assisted CRNs, and propose two schemes based on the successive convex approximation method to jointly optimize the transmit precoding and phase shift of matrices. The authors of \cite{CR4} propose alternative optimization method based semidefinite relaxation techniques via jointly optimizing vertical beamforming and RIS to maximize spectral efficiency of the secondary network in an RIS-assisted CRN. The work in \cite{CR5} studies an RIS-assisted spectrum sharing underlay cognitive radio wiretap channel, and proposes efficient algorithms to enhance the secrecy rate of SU for three different cases: full CSI, imperfect CSI, and no CSI. In addition, the authors of \cite{learning1} develop a deep reinforcement learning (DRL) based algorithm to obtain the transmit beamforming and phase shifts in RIS-assisted multi-user MISO systems. A novel DRL-based hybrid beamforming algorithm is designed in \cite{learning2} to improve the coverage range of THz-band frequencies in multi-hop RIS-assisted communication systems. The authors of \cite{learning3} propose a decaying deep Q-network based algorithm to solve an energy consumption minimizing problem in RIS in unmanned aerial vehicle (UAV) enabled wireless networks.

\subsubsection{Studies on Multi-cell System with RIS} All the above works \cite{wu2019towards, 9039554, wu1, guo2019weighted, 9203956, 9240028, yang2020energy, 9076830,  9110849, 9133107, 8743496, 9133130, CR1, CR2, CR3, CR4, CR5} focus on a single-cell setup. Different from the single-cell scenario, the key issue for a multi-cell system is the presence of inter-cell interference. The authors of \cite{hua2020intelligent} jointly optimize transmission beamforming and the reflection elements to maximize the minimum user rate in an RIS-assisted joint processing coordinated multipoint (JP-CoMP) system. For fairness, a max-min weighted signal-interference-plus-noise ratio (SINR) problem in an RIS-assisted multi-cell MISO system is solved by three AO-based algorithms in \cite{xie2020max}, and the numerical results demonstrate that RIS can help improve the SINR and suppress the inter-cell interference, especially for cell-edge users. Three algorithms are proposed to minimize the sum power of a large-scale discrete-phase RIS-assisted MIMO multi-cell system in \cite{omid2020irs}. In \cite{ni2020resource}, the authors give a novel algorithm for resource allocation in an RIS-assisted multi-cell NOMA system to maximize the sum rate. The authors of \cite{kim2020exploiting} jointly optimize the reflection elements, base station (BS)-user pairing, and user transmit power by a DRL approach to maximize the sum rate for a multi-RIS-assisted MIMO multi-cell uplink system. 
 
\subsection{Our Contributions}

It is noteworthy that almost all existing papers studying RIS-assisted multi-cell systems focus on the rate maximization problem, which usually means making full use of the resource. In many scenarios, however, the user demands are finite, and hence it is relevant to meet the user data demands rather than achieving highest total rate by exhausting all the resource. For such scenarios, inter-cell interference is not represented by the worst-case value as not all resources are used for transmission. Having this in mind, we consider the time-frequency resource consumption minimization problem in the multi-RIS-assisted multi-cell system. Different from the existing full-buffer works, in our scenario, the mutual interference is not known a priori instead of the worst-case value. The main contributions of this paper can be summarized as follows:

\begin{itemize}
\item We formulate an optimization problem for the multi-RIS-assisted multi-cell system. Our objective is to optimize the reflection coefficients of all RISs in the system to minimize the total time-frequency resource consumption subject to the user demand requirement.
\item Due to the non-convexity of this problem, we first investigate its single-cell version. We derive an approximate convex model for the single-cell problem. We then propose an algorithm based on the Majorization-Minimization (MM) method to obtain a locally optimal solution.
\item In the next step, we embed the single-cell algorithm into an algorithmic framework to obtain a locally optimal solution for the overall multi-cell problem, and prove its feasibility and convergence. The algorithmic framework also is proved that it can reach the global optimality if the single-cell can be solved to optimum.
\item We evaluate the performance of the system optimized by our algorithmic framework, make performance comparison to three benchmark solutions. The numerical results demonstrate the proposed algorithmic framework is capable of achieving significant time-frequency resource saving.
\end{itemize}

\subsection{Organization and Notation}

\subsubsection{Organization} The remainder of this paper is organized as follows. In Section \ref{Sec:SystemModel}, we describe the multi-RIS-assisted multi-cell system model and formulate our optimization problem with the load coupling model for characterizing inter-cell interference. In Section \ref{Sec:SingleCell}, we investigate the single-cell problem, and propose an algorithm based on the MM method. In Section \ref{Sec:multi-cell}, we propose an algorithmic framework for the multi-cell problem. In Section \ref{Sec:evaluation}, simulation results are shown for performance evaluation. Finally, we conclude in Section \ref{Sec:conclusion}.

\subsubsection{Notation} The matrices and vectors are respectively denoted by boldface capital and lower case letters. $\mathbb{C}^{1 \times M}$ and $\mathbb{C}^{M \times 1}$ stand for a collection of complex matrices, in which each matrix has size $1 \times M$ and $M \times 1$, respectively. $\mathfrak{diag} \{{\cdot}\}$ denotes the diagonalization operation. For a complex value $\mathrm{e}^{\mathrm{i}\theta}$, $\mathrm{i}$ denotes the imaginary unit. $x\sim \mathcal{CN}(\mu,\sigma^2)$ denotes the circularly symmetric complex Gaussian (CSCG) distribution with mean $\mu$ and variance $\sigma^2$. Transpose, conjugate, and transpose-conjugate operations are denoted by $(\cdot)^T$, $(\cdot)^\star$, and $(\cdot)^H$, respectively. $\Re\{\cdot\}$ and $\Im\{\cdot\}$ denote the real and imaginary parts of a complex number, respectively. $\mathcal{O}(\cdot)$ denotes the order in computational complexity. In addition, $||\cdot||_{\infty}$ denotes the infinity norm of a vector.


\section{System Model and Problem Formulation} \label{Sec:SystemModel}

\subsection{System Model}\label{Preliminaries}

\begin{figure}[tbp]
\centering
\begin{overpic}
[scale=0.16]{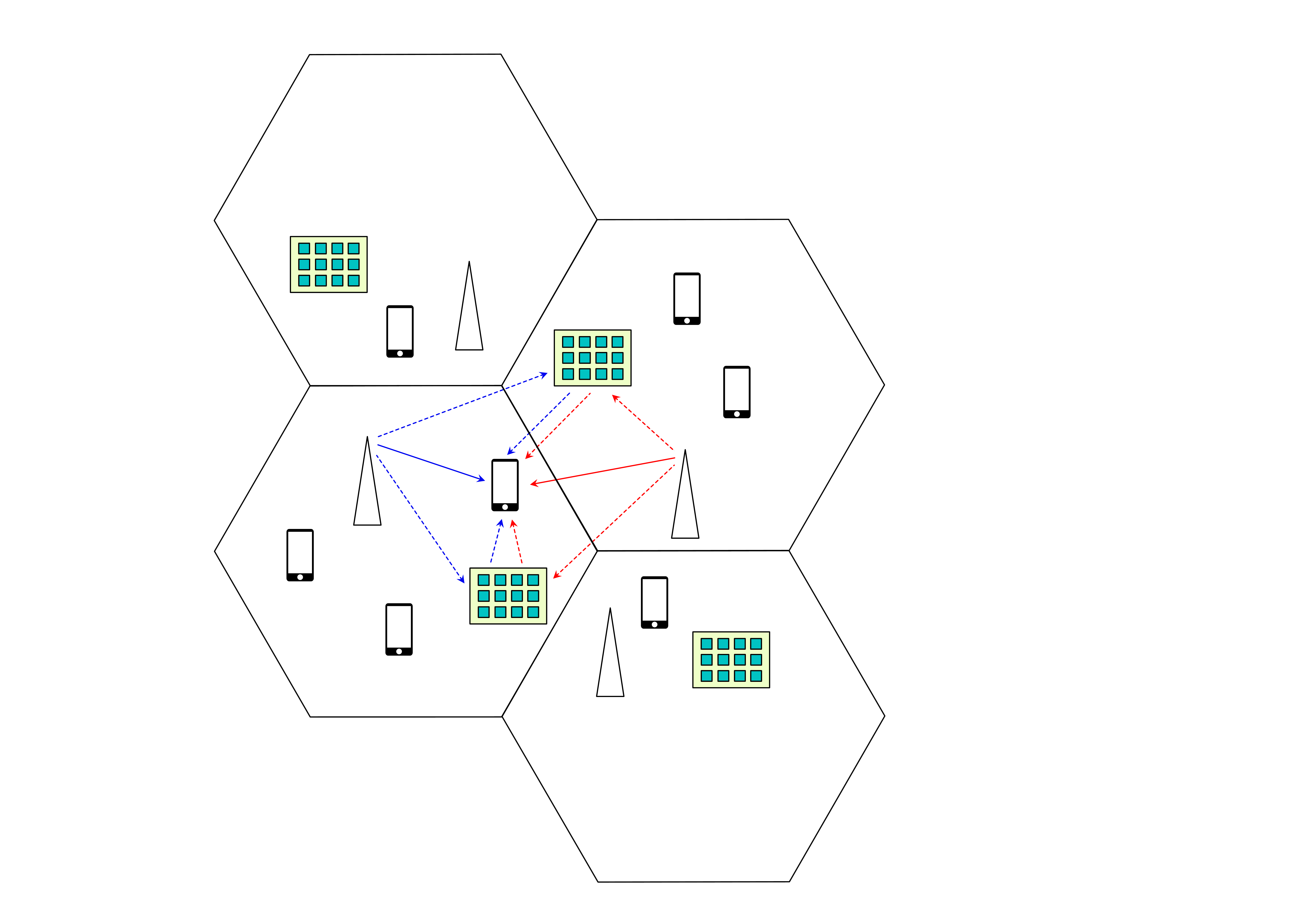}
\put(38.5,46){$g_{i,j}$}
\put(35.5,36.5){$\boldsymbol{H}_{l,i}$}
\put(42,30){$\boldsymbol{G}_{l,j}$}
\end{overpic}
\caption{This figure shows a concise example for the received signal and interference at a user in the multi-RIS-assisted multi-cell system, where blue solid line, dashed line stand for the direct and reflected signal, respectively, and red solid line, dashed line stand for the direct and reflected interference, respectively.}\label{fig:sys}
\end{figure}

We consider a downlink multi-cell wireless system with a total of $L$ RISs and $I$ user equipments (UEs) distributed in $J$ cells. Denote by $\mathcal{L} = \{1,2,...,L \}$, $\mathcal{I} = \{1,2,...,I \}$, and $\mathcal{J} = \{1,2,...,J\}$ the sets of RISs, cells, and UEs, respectively. In addition, $\mathcal{J}_i$ $(\forall i \in \mathcal{I})$ represents the set of the UEs served by cell $i$. Without loss of generality, we assume that all RISs have the same number (denoted by $M$) of reflection elements. Let $\mathcal{M} = \{1,2,...,M\}$. 

We use $g_{ij}$, $\boldsymbol{G}_{il} \in \mathbb{C}^{1 \times M}$, and $\boldsymbol{H}_{lj}\in \mathbb{C}^{M \times 1}$ to denote the channel gain from the BS of cell $i$ to UE $j$, that from the BS of cell $i$ to RIS $l$, and that from RIS $l$ to UE $j$, respectively. Note that the channels between BS and RIS, as well as the channels between RIS and users in RIS-assisted multi-cell systems can be estimated by existing works, such as the channel estimation framework based on the PARAllel FACtor (PARAFAC) decomposition in \cite{Estimation}. The diagonal reflection matrix of the $l$-th RIS is denoted by 
\begin{equation}
\boldsymbol{\Theta}_l =\mathfrak{diag} \{\phi_{l1}, \phi_{l2} , ..., \phi_{lM} \},
\end{equation}
where $\phi_{lm} = \lambda_{lm} \mathrm{e}^{\mathrm{i}\theta_{lm}}$, $\forall m \in \mathcal{M}$. Further, $\phi_{lm}$ is the $m$-th reflection coefficient of the $l$-th RIS, where $\lambda_{lm}$ and $\theta_{lm}$ represent its amplitude and phase, respectively. In this paper, we consider the following three value domains of the reflection coefficient. 
 \subsubsection{Ideal} Ideally, the amplitude and phase can be adjusted independently, i.e., $\lambda_{lm} \in \left[0,1\right]$ and $\theta_{lm} \in [0, 2\pi]$. This leads to the following domain definition:
\begin{equation}
\mathcal{D}_1=\left\{ \lambda_{lm}  \mathrm{e}^{\mathrm{i}\theta_{lm}}\Big| \lambda_{lm} \in [0,1], \theta_{lm} \in [0,2\pi]\right\}.
\end{equation}
\subsubsection{Continuous Phase Shifter} In this case, the amplitude is at its maximum, {i.e.}, $\lambda_{lm} = 1$, and only the phase can be adjusted. Under this assumption, the corresponding domain is:
\begin{equation}
\mathcal{D}_2=\left\{  \mathrm{e}^{\mathrm{i}\theta_{lm}}\Big| \theta_{lm} \in [0,2\pi]\right\}.
\end{equation}
\subsubsection{Discrete Phase Shifter} This is also called practical RIS. A practical RIS only provides a finite number of phase shifts, and its amplitude is fixed at one. We assume the phase shift can be adjusted in $N$ discrete values in the following domain:
\begin{equation}
\mathcal{D}_3= \left\{  \mathrm{e}^{\mathrm{i}\theta_{lm}}\Big| \theta_{lm}\in \{0,\Delta\theta, ..., \left(N-1\right)\Delta\theta\}, N\geq2\right\},
\end{equation}
where $\Delta \theta = 2\pi / N$.

Note that, for single-user single-antenna scenarios with perfect CSI, amplitude control is unnecessary because the amplitude should be one in both theory and practice \cite{Amplitude_Control}. For the multi-user systems with perfect SCI, the work in \cite{guo2019weighted} has shown that the performance gain provided by the amplitude control is negligible. Therefore, in our paper, we do not exclude amplitude control, and use optimization to examine if our scenario with cell load coupling also submits to this observation. Our numerical results (see later in Section \ref{Sec:evaluation}) show the optimal amplitude of the RIS should be one.

We use $x_k$ to represent the transmitted signal at BS in cell $k$. Accordingly, the overall received signal including interference at UE $j$ is given by
\begin{align}
	y_j& =\underbrace{ \sum_{k \in \mathcal{I}}g_{kj}x_k}_{\text{Direct links (from BSs)}} +  \underbrace{\sum_{k \in \mathcal{I}}\sum_{l \in \mathcal{L}} \boldsymbol{G}_{kl}  \boldsymbol{\Theta}_l\boldsymbol{H}_{lj}x_k}_{\text{RIS-assisted links}} +z_j \\
	&= \sum_{k \in \mathcal{I}} \left( g_{kj} + \sum_{l \in \mathcal{L}} \boldsymbol{G}_{kl}  \boldsymbol{\Theta}_l\boldsymbol{H}_{lj}\right) x_k + z_j,\label{y1}
\end{align}
where $z_j\sim \mathcal{CN}(0,\sigma^2)$ is the additive white Gaussian noise at UE $j$. UE $j$ in cell $i$ treats all the signals from other cells as interference, then (\ref{y1}) can be re-written by
\begin{align}
	y_j& =\underbrace{ \left(g_{ij} + \sum_{l \in \mathcal{L}} \boldsymbol{G}_{il}  \boldsymbol{\Theta}_l\boldsymbol{H}_{lj}\right)x_i}_{\text{Desired signal}} \notag \\ 
	&+  \underbrace{\left( \sum_{k\in\mathcal{I}, \atop k\neq i}g_{kj} +  \sum_{k\in\mathcal{I}, \atop k\neq i}\sum_{l \in \mathcal{L}} \boldsymbol{G}_{kl}  \boldsymbol{\Theta}_l\boldsymbol{H}_{lj}\right)x_k}_{\text{Interference}} +z_j,
\end{align}
Let $P_k$ be the transmission power per resource block (RB) in cell $k$, and we use $ \rho_k \in [0,1]$ to represent the proportion of RBs consumed in cell $k$, and this entity is referred to as the cell {\em load} in \cite{mogensen2007lte}. In fact, it is quite difficult to fully coordinate the inter-cell interference in large-scale multi-cell networks. For this reason, we use the load levels to characterize a cell's likelihood of interfering the others. A cell transmitting on many RBs, i.e., high load, generates more interference to others than almost idle cell of low load. We remark that this inter-cell interference approximation is suitable for the network-level performance analysis, and \cite{fehske2012aggregation, klessig2015performance} have shown that the approximation has good accuracy for inter-cell interference characterization. Thus, the power of interference received at UE $j$ is calculated by
\begin{equation}
	\sum_{k\in\mathcal{I}, \atop k\neq i}  | g_{kj}\!+\! \sum_{l \in \mathcal{L}}\boldsymbol{G}_{kl}  \boldsymbol{\Theta}_l\boldsymbol{H}_{lj} |^2\rho_k{P_k},
\end{equation}
where $g_{kj}$ and $\sum_{l \in \mathcal{L}}\boldsymbol{G}_{kl}  \boldsymbol{\Theta}_l\boldsymbol{H}_{lj}$ are the channel gain of the direct interference link and that of the RIS-assisted interference link between cell $k$ and UE $j$, respectively.\\
Hence, the signal-to-interference-and-noise ratio (SINR) of the UE $j$ in cell $i$ is modelled as
\begin{equation}\label{sinr1}
\text{SINR}_j \left(\boldsymbol{\rho}, \boldsymbol{\phi} \right)\!=\! \frac{ \left | g_{ij} \! + \! \sum\limits_{l \in \mathcal{L}}\boldsymbol{G}_{il}  \boldsymbol{\Theta}_l\boldsymbol{H}_{lj}\right |^2{P_i} }{ \sum\limits_{k\in\mathcal{I}, \atop k\neq i} \left | g_{kj}\!+\! \sum\limits_{l \in \mathcal{L}}\boldsymbol{G}_{kl}  \boldsymbol{\Theta}_l\boldsymbol{H}_{lj}\right |^2{P_k}\rho_k \!+\! \sigma^2},
\end{equation}
where $\boldsymbol{\rho} = [\rho_1, \rho_2, ..., \rho_I ]^T$ and $\boldsymbol{\phi} = \{\phi_{lm} | l \in \mathcal{L}, m \in \mathcal{M} \}$. Note that the load of cell $k$, $\rho_k$ in (\ref{sinr1}), the proportion of resource used for transmission in cell $k$, serves as interference scaling. 

The achievable capacity of UE $j$ is then $ \log \left(1+\text{SINR}_j \left(\boldsymbol{\rho}, \boldsymbol{\phi} \right)\right)$. Let $d_j$ denote the demand of UE $j$, and denote by $B$ and $K$ the bandwidth of each RB and the total number of RBs in one cell, respectively. In addition, let $\rho_j$ be the proportion of RBs consumption by a specific UE $j$ in the associated cell. We have  
\begin{equation}
KB\rho_j \log_2\left(1+\text{SINR}_j \left(\boldsymbol{\rho}, \boldsymbol{\phi}\right)\right) \geq d_j.
\end{equation}
For any cell $i$, we have
\begin{equation}\label{rhoi}
\rho_i = \sum_{j\in \mathcal{J}_i} \rho_j \geq \sum_{j\in \mathcal{J}_i} \frac{d_j}{ K B \log_2\left(1+\text{SINR}_j \left(\boldsymbol{\rho}, \boldsymbol{\phi}\right)\right)}.
\end{equation}

For convenience, in the following discussion, we use normalized $d_j$ such that $B$ and $K$ are no longer necessary.

\subsection{Problem Formulation}\label{MathematicalFormulation}
We consider the following optimization problem where the reflection coefficients are to be optimized for minimizing the total resource consumption, subject to the user demand requirement. \begin{subequations}\label{formulation}
\begin{align}
\textup{[{\rm P1}]}\ \ \ \underset{\boldsymbol{\rho}, \boldsymbol{\phi}}{\min} \ \ & \sum_{ i \in \mathcal{I}} \rho_i   \label{P1obj} \\
\textup{ s.t.}\ \ \ & \rho_j \log_2\left(1+\text{SINR}_j \left(\boldsymbol{\rho}, \boldsymbol{\phi}\right)\right) \geq d_j, \forall j \in \mathcal{J}, \label{P1C1}\\
& \rho_i = \sum_{j\in \mathcal{J}_i} \rho_j , \forall i \in \mathcal{I},\label{P1C2}\\
& \phi_{lm} \in \mathcal{D}, \forall l \in \mathcal{L},\forall m \in \mathcal{M}.\label{P1C3} 
\end{align}
\end{subequations}

The objective function (\ref{P1obj}) is the sum of required RBs. Constraint (\ref{P1C1}) represents user demand requirement, and it is non-convex. Constraint (\ref{P1C2}) is the load of cell $i$. Constraint (\ref{P1C3}) is for the value domain of the reflection coefficients, in which $\mathcal{D}$ can be any of $\mathcal{D}_1$, $\mathcal{D}_2$, and $\mathcal{D}_3$. In general, problem P1 is hard to solve, because not only is it non-convex, but also the cells are highly coupled. From equation (\ref{sinr1}) in constraint (\ref{P1C1}), we can see that the inter-cell interference depends on the load levels, which in turn is governed by interference. What is more, the problem have different properties in different domains, and the impact of these domains should be discussed further.

\section{Optimization within a Cell} \label{Sec:SingleCell}

Let us first consider the simpler single-cell case, and later we will use the derived results to address the multi-cell problem. In the single-cell problem, we minimize the load of any generic cell ${i}$, whereas the load levels as well as the RIS reflection coefficients of the other cells are given. Let $\mathcal{L}_i$ $(\forall i \in \mathcal{I})$ represent the set of RISs in cell $i$. We define $\boldsymbol{\rho}_{-i} = \left\{{\rho}_{k}|k\in \mathcal{I},k\neq i\right\}$, $\boldsymbol{\phi}_{-i} = \left\{ {\phi}_{lm} |m\in \mathcal{M},  l\in \mathcal{L}\setminus\mathcal{L}_i\right\}$, and $\boldsymbol{\phi}_{i} = \left\{ {\phi}_{lm} |m\in \mathcal{M},  l\in \mathcal{L}_i\right\}$.

\subsection{Formulation and Transformation}
For a UE $j$ in cell $i$, the intended signal received at UE $j$ is given by
\begin{align}
&{ \Bigg | \underbrace{g_{ij}+ \sum_{l \in \mathcal{L}\setminus \mathcal{L}_i} \boldsymbol{G}_{il}  \boldsymbol{\Theta}_l \boldsymbol{H}_{lj}}_{\text{known}}+\underbrace{\sum_{l \in \mathcal{L}_i} \boldsymbol{G}_{il}  \boldsymbol{\Theta}_l \boldsymbol{H}_{lj}}_{\text{to be optimized}} \Bigg|^2 P_i}\notag \\
=& { \Bigg | \hat{g}_{ij}+{\sum_{l \in \mathcal{L}_i} \boldsymbol{G}_{il}  \boldsymbol{\Theta}_l \boldsymbol{H}_{lj}} \Bigg|^2 P_i},
\end{align}
where $\hat{g}_{i,j}$ is the known part of total gain.

Similarly, the interference received at UE $j$ is given by
\begin{equation}
 {\sum_{k\in\mathcal{I}, \atop k\neq i}\Bigg | \hat{g}_{kj}+{\sum_{l \in \mathcal{L}_i}\boldsymbol{G}_{kl}  \boldsymbol{\Theta}_l\boldsymbol{H}_{lj}}\Bigg|^2 P_k \rho_k}.
\end{equation}
Let $\boldsymbol{\Phi}_l = [\phi_{l1}, \phi_{l2}, ..., \phi_{lM}]^T$, we have 
\begin{equation}
 \boldsymbol{G}_{il}  \boldsymbol{\Theta}_l\boldsymbol{H}_{lj} = \boldsymbol{\Lambda}_{ijl}\boldsymbol{\Phi}_l,
\end{equation}
where $\boldsymbol{\Lambda}_{ijl} = \boldsymbol{G}_{il}\mathfrak{diag} \{\boldsymbol{H}_{lj}\} $. To distinguish from the SINR formulation (\ref{sinr1}) in the multi-cell problem, we use $\text{SINR}_j^{\left \langle s \right \rangle} \left(\boldsymbol{\phi}_{i} \right)$ as follows to present the SINR of UE $j$ in the single-cell problem.
\begin{align}\label{SSINR}
\text{SINR}_j^{\left \langle s \right \rangle} \left(\boldsymbol{\phi}_{i} \right)=  \frac{ \left | \hat{g}_{ij}+ \sum\limits_{l \in \mathcal{L}_i}\boldsymbol{\Lambda}_{ijl}\boldsymbol{\Phi}_l\right |^2{P_i} }{ \sum\limits_{k\in\mathcal{I}, \atop k\neq i} \left | \hat{g}_{kj}+ \sum\limits_{l \in \mathcal{L}_i}\boldsymbol{\Lambda}_{kjl}\boldsymbol{\Phi}_l\right |^2{P_k}\rho_k + \sigma^2}.
\end{align}
Then the single-cell problem is as follows.
\begin{subequations}\label{single_formulation}
\begin{align}
\textup{[\rm P2]}\ \ \ \underset{ \boldsymbol{\phi}_{i}}{\min} \ \ &\sum_{j\in \mathcal{J}_i} \frac{d_j}{\log_2\left(1+ \text{SINR}_j^{\left \langle s \right \rangle} \left( \boldsymbol{\phi}_{i} \right)  \right)} \label{P2obj}\\
\textup{ s.t.}\ \ \ & \phi_{lm} \in \mathcal{D}, \forall m \in \mathcal{M},\forall l \in \mathcal{L}_i. \label{P2C1}
\end{align}
\end{subequations}
Note that (\ref{P2obj}) is non-convex. To proceed, we first introduce auxiliary variables $\boldsymbol{\gamma}_i = [\gamma_1,\gamma_2,..., \gamma_{J_i}]^T$, where $J_i = |\mathcal{J}_i|$. Then problem P2 can be rewritten as follows.
\begin{subequations}
\begin{align}
\textup{[\rm P2.1]}\ \ \ \underset{ \boldsymbol{\phi}_{i}, \boldsymbol{\gamma}_i }{\min} \ \ &\sum_{j\in \mathcal{J}_i} \frac{d_j}{\log_2\left(1+ \gamma_j  \right)} \label{P2.1obj}\\
\textup{ s.t.}\ \ \ & \text{(\ref{P2C1})}, \notag\\
& \text{SINR}_j^{\left \langle s \right \rangle} \left( \boldsymbol{\phi}_{i} \right) \geq \gamma_j , \forall j \in \mathcal{J}_i.
 \label{P2.1C2}
\end{align}
\end{subequations}
Clearly, the objective function of P2.1 is convex. However, the additional constraint (\ref{P2.1C2}) is not convex. We introduce auxiliary variables $\boldsymbol{\beta}_i = [\beta_1,\beta_2,..., \beta_{J_i}]^T$ such that 
\begin{equation}\label{convexconstraint}
\sum_{k\in\mathcal{I}, k\neq i}\left | \hat{g}_{kj}+ \sum_{l \in \mathcal{L}_i}\boldsymbol{\Lambda}_{kjl}\boldsymbol{\Phi}_l\right |^2{P_k}\rho_k + \sigma^2\leq \beta_j,\forall j \in \mathcal{J}_i.
\end{equation}
\begin{lemma}
Constraint (\ref{convexconstraint}) is convex.
\end{lemma}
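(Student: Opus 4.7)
The plan is to rewrite the left-hand side of (\ref{convexconstraint}) as a non-negative weighted sum of squared moduli of complex-affine functions of the decision variables $\boldsymbol{\phi}_i$, and then observe that $|\cdot|^2$ composed with affine functions is convex. The constraint then becomes ``convex function $\le$ affine variable $\beta_j$,'' which defines a convex set.

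First I would fix any $k\ne i$ and $j\in\mathcal{J}_i$ and note that, by construction, $\hat{g}_{kj}$ is a known constant (it absorbs all channel contributions from RISs outside cell $i$ as well as the direct link), while for each $l\in\mathcal{L}_i$ the term $\boldsymbol{\Lambda}_{kjl}\boldsymbol{\Phi}_l$ is linear in the reflection-coefficient vector $\boldsymbol{\Phi}_l$. Hence the inner expression
\[
w_{kj}(\boldsymbol{\phi}_i)\;=\;\hat{g}_{kj}+\sum_{l\in\mathcal{L}_i}\boldsymbol{\Lambda}_{kjl}\boldsymbol{\Phi}_l
\]
is complex-affine in $\boldsymbol{\phi}_i$. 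Next I would invoke the standard fact that the map $z\mapsto |z|^2$ from $\mathbb{C}$ (viewed as $\mathbb{R}^2$) to $\mathbb{R}$ is convex, e.g., by writing it as $(\Re z)^2+(\Im z)^2$; since composition of a convex function with an affine map preserves convexity, each $|w_{kj}(\boldsymbol{\phi}_i)|^2$ is convex in the real and imaginary parts of $\boldsymbol{\phi}_i$.

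Then I would use that within the single-cell problem P2, the quantities $P_k\rho_k$ (for $k\ne i$) and $\sigma^2$ are given non-negative constants, and a non-negative weighted sum of convex functions is convex. Hence the whole left-hand side of (\ref{convexconstraint}) is a convex function of $\boldsymbol{\phi}_i$. Subtracting the affine term $\beta_j$ preserves convexity, so the constraint takes the form $F(\boldsymbol{\phi}_i,\beta_j)\le 0$ with $F$ convex jointly in $(\boldsymbol{\phi}_i,\beta_j)$, which defines a convex set. This is the claim.

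There is no real obstacle; the only point that needs a line of care is distinguishing the ``ambient'' convexity of the constraint from the non-convex value-domain constraint (\ref{P2C1}) on $\phi_{lm}$, which is handled separately in the surrounding algorithmic framework. The lemma is purely about (\ref{convexconstraint}) as an inequality among the auxiliary and optimization variables, so the argument above suffices.
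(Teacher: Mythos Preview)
Your proof is correct, and the underlying idea is the same as the paper's: the left-hand side of (\ref{convexconstraint}) is a non-negative weighted sum of squared moduli of complex-affine maps in $\boldsymbol{\phi}_i$, hence convex, and subtracting the affine variable $\beta_j$ gives a convex sublevel set. The only difference is presentational. You invoke the composition rule ($|\cdot|^2$ convex, affine precomposition preserves convexity) directly, whereas the paper expands the square explicitly as
\[
\sum_{l \in \mathcal{L}_i}\boldsymbol{\Phi}_l^H\boldsymbol{\Lambda}_{kjl}^H\boldsymbol{\Lambda}_{kjl}\boldsymbol{\Phi}_l
+ 2\Re\Big\{\hat{g}_{kj}^\star\sum_{l \in \mathcal{L}_i}\boldsymbol{\Lambda}_{kjl}\boldsymbol{\Phi}_l\Big\}
+ |\hat{g}_{kj}|^2,
\]
and then observes that the quadratic part is a second-order-cone term and the cross term is affine. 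Your route is more compact; the paper's explicit SOC identification has the side benefit of feeding directly into the interior-point complexity count used later in the single-cell complexity analysis.
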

\begin{proof}
Note that 
\begin{align}
 &\left | \hat{g}_{kj}+ \sum_{l \in \mathcal{L}_i}\boldsymbol{\Lambda}_{kjl}\boldsymbol{\Phi}_l\right |^2 \notag\\
 =& \left(\hat{g}_{kj}+ \sum_{l \in \mathcal{L}_i}\boldsymbol{\Lambda}_{kjl}\boldsymbol{\Phi}_l\right)\left(\hat{g}_{kj}^\star+ \sum_{l \in \mathcal{L}_i} \boldsymbol{\Phi}^H_l\boldsymbol{\Lambda}_{kjl}^H\right)\notag\\
=&\sum_{l \in \mathcal{L}_i}\boldsymbol{\Phi}_l^H\boldsymbol{\Lambda}_{kjl}^H\boldsymbol{\Lambda}_{kjl}\boldsymbol{\Phi}_l  \!+\! 2\Re\left\{\hat{g}_{kj}^\star\sum_{l \in \mathcal{L}_i}\boldsymbol{\Lambda}_{kjl}\boldsymbol{\Phi}_l\right\} \!+\! | \hat{g}_{kj}|^2 ,
\end{align}
where $\sum_{l \in \mathcal{L}_i} \boldsymbol{\Phi}_l^H\boldsymbol{\Lambda}_{kjl}^H\boldsymbol{\Lambda}_{kjl}\boldsymbol{\Phi}_l$ is the second-order cone (SOC) and $2\Re\{\hat{g}_{kj}^\star\sum_{l \in \mathcal{L}_i}\boldsymbol{\Lambda}_{kjl}\boldsymbol{\Phi}_l\}$ is affine, hence constraint (\ref{convexconstraint}) is convex.
\end{proof}
With $\boldsymbol{\beta}_i$, the SINR constraint (\ref{P2.1C2}) can be expressed by
\begin{equation}\label{non-convexconstraint}
\left | \hat{g}_{ij}+ \sum_{l \in \mathcal{L}_i}\boldsymbol{\Lambda}_{ijl}\boldsymbol{\Phi}_l\right |^2{P_i} \geq \beta _j \gamma _j,\forall j \in \mathcal{J}_i.
\end{equation}
Thus, problem P2.1 can be restated as follows. 
\begin{subequations}\label{final_formulation}
\begin{align}
\textup{[\rm P2.2]}\ \ \ \underset{ \boldsymbol{\phi}_{i}, \boldsymbol{\gamma}_i, \boldsymbol{\beta}_i}{\min} \ \ &\sum_{j\in \mathcal{J}_i} \frac{d_j}{\log_2\left(1+ \gamma_j  \right)} \label{P2fobj}\\
\textup{ s.t.}\ \ \ & \text{(\ref{P2C1}), (\ref{convexconstraint}), and (\ref{non-convexconstraint})}. \notag
\end{align}
\end{subequations}
\begin{proposition}
Problems P2 and P2.2 are equivalent.
\end{proposition}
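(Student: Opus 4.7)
The plan is to establish equivalence by the standard two-step chain P2 $\Leftrightarrow$ P2.1 $\Leftrightarrow$ P2.2, showing in each step that the feasible sets project onto each other under the original variables and that the objective values coincide. Throughout, I will use the fact that the objective $\sum_{j \in \mathcal{J}_i} d_j/\log_2(1+\gamma_j)$ is strictly decreasing in each $\gamma_j$ (for $\gamma_j > 0$), which forces the inequality constraints involving $\gamma_j$ to be tight at any optimum.

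First, for P2 $\Leftrightarrow$ P2.1: given any feasible $\boldsymbol{\phi}_i$ for P2, setting $\gamma_j = \text{SINR}_j^{\langle s \rangle}(\boldsymbol{\phi}_i)$ yields a feasible point of P2.1 with identical objective value. Conversely, for any feasible $(\boldsymbol{\phi}_i, \boldsymbol{\gamma}_i)$ of P2.1, constraint (\ref{P2.1C2}) gives $\gamma_j \leq \text{SINR}_j^{\langle s \rangle}(\boldsymbol{\phi}_i)$, so by monotonicity of the objective in $\gamma_j$ the P2.1 objective at $(\boldsymbol{\phi}_i,\boldsymbol{\gamma}_i)$ is no smaller than that at $(\boldsymbol{\phi}_i,\text{SINR}^{\langle s \rangle}(\boldsymbol{\phi}_i))$, which is the value of the P2 objective at $\boldsymbol{\phi}_i$. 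Hence the optimal values agree and optimizers can be mapped between the two problems.

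Next, for P2.1 $\Leftrightarrow$ P2.2: given a feasible $(\boldsymbol{\phi}_i,\boldsymbol{\gamma}_i)$ of P2.1, define
\[
\beta_j \;=\; \sum_{k \in \mathcal{I},\,k\neq i}\bigl|\hat{g}_{kj}+\textstyle\sum_{l\in\mathcal{L}_i}\boldsymbol{\Lambda}_{kjl}\boldsymbol{\Phi}_l\bigr|^2 P_k \rho_k + \sigma^2,
\]
so (\ref{convexconstraint}) holds with equality and (\ref{P2.1C2}) rewrites as $|\hat{g}_{ij}+\sum_l \boldsymbol{\Lambda}_{ijl}\boldsymbol{\Phi}_l|^2 P_i \geq \beta_j \gamma_j$, which is exactly (\ref{non-convexconstraint}); hence $(\boldsymbol{\phi}_i,\boldsymbol{\gamma}_i,\boldsymbol{\beta}_i)$ is feasible for P2.2 with the same objective value. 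Conversely, given any feasible triple for P2.2, combining (\ref{convexconstraint}) and (\ref{non-convexconstraint}) gives
\[
|\hat{g}_{ij}+\textstyle\sum_l \boldsymbol{\Lambda}_{ijl}\boldsymbol{\Phi}_l|^2 P_i \;\geq\; \beta_j \gamma_j \;\geq\; \bigl(\textstyle\sum_{k\neq i}|\hat{g}_{kj}+\sum_l \boldsymbol{\Lambda}_{kjl}\boldsymbol{\Phi}_l|^2 P_k \rho_k + \sigma^2\bigr)\gamma_j,
\]
which, upon dividing, recovers $\text{SINR}_j^{\langle s \rangle}(\boldsymbol{\phi}_i) \geq \gamma_j$, i.e.\ feasibility of $(\boldsymbol{\phi}_i,\boldsymbol{\gamma}_i)$ for P2.1 with the identical objective. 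Thus the two problems share the same optimal value.

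The only subtlety I anticipate is ensuring that the manipulation $\beta_j\gamma_j \geq (\cdot)\gamma_j$ is valid, which requires $\gamma_j \geq 0$; this is harmless since $\gamma_j < 0$ would make $\log_2(1+\gamma_j)$ either undefined or yield a nonpositive denominator and can be excluded without loss of generality (one may restrict to $\gamma_j \geq 0$ or equivalently observe that any optimal solution must have $\text{SINR}_j^{\langle s \rangle} > 0$, given the demand $d_j > 0$). The chain above then delivers the claimed equivalence. Combining the two steps concludes the proof.
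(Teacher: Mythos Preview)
Your proof is correct and follows essentially the same approach as the paper, which simply asserts that ``by construction, there clearly is a one-to-one mapping between the solutions of P2 and P2.2.'' You have merely spelled out the details of that mapping via the intermediate problem P2.1 and the monotonicity of the objective in $\gamma_j$, which is exactly the reasoning the paper leaves implicit.
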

\begin{proof}
The result follows from that, by construction, there clearly is a one-to-one mapping between the solutions of P2 and P2.2.
\end{proof}

\subsection{Problem Approximation}

Consider P2.2, where constraints (\ref{P2C1}) and (\ref{non-convexconstraint}) are non-convex. We first deal with constraint (\ref{non-convexconstraint}), and defer the discussion of domain constraint (\ref{P2C1}).

We apply the transformation $\beta _j \gamma _j = \frac{1}{4}((\beta _j+\gamma_j)^2-(\beta _j-\gamma_j)^2)$ to constraint (\ref{non-convexconstraint}), we thus have the following equivalent constraint
\begin{align}\label{non-convexconstraint2}
\left(\beta _j+\gamma_j\right)^2 &-\left(\beta _j-\gamma_j\right)^2\notag\\
& - 4{P_i}\left | \hat{g}_{ij}+ \sum_{l \in \mathcal{L}_i}\boldsymbol{\Lambda}_{ijl}\boldsymbol{\Phi}_l\right |^2 \leq 0,\forall j \in \mathcal{J}_i.  
\end{align}
We define a function as follows
\begin{align}\label{definition1}
&\mathcal{F}_j(\gamma_j, \beta_j, \boldsymbol{\phi}_i) \triangleq \notag\\
& \left(\beta _j+\gamma_j\right)^2-\left(\beta _j-\gamma_j\right)^2 - 4{P_i}\left | \hat{g}_{ij}+ \sum_{l \in \mathcal{L}_i}\boldsymbol{\Lambda}_{ijl}\boldsymbol{\Phi}_l\right |^2, \forall j \in \mathcal{J}_i.
\end{align}
Observing that $\mathcal{F}_j$ is the difference of two convex (DC) functions, we adopt the DC programming technique:
\begin{enumerate}
\item The Taylor expansion of $\left(\beta _j-\gamma_j\right)^2$ at point $(\tilde{\beta}_j,\tilde{\gamma}_j)$ is given by
\begin{equation}\label{TE1}
	\left(\beta _j\!-\!\gamma_j\right)^2 \!=\! 2(\tilde{\beta} _j-\tilde{\gamma}_j)\left(\beta _j \!-\!\gamma_j\right) - (\tilde{\beta} _j-\tilde{\gamma}_j)^2 \!+\! R_1(\beta_j,\gamma_j),
\end{equation}
where $R_1(\beta_j,\gamma_j)$ is the remainder. As $\left(\beta _j-\gamma_j\right)^2$ is convex, we can easily know
\begin{equation}\label{TE2}
	R_1(\beta_j,\gamma_j) \geq 0.
\end{equation}
By (\ref{TE1}) and (\ref{TE2}), we have
\begin{equation}\label{equality1}
\left(\beta _j-\gamma_j\right)^2 \geq  2(\tilde{\beta} _j-\tilde{\gamma}_j)\left(\beta _j-\gamma_j\right) - (\tilde{\beta} _j-\tilde{\gamma}_j)^2,
\end{equation}
\item Similarly, we have
\begin{align}\label{equality2}
&\left | \hat{g}_{ij}+ \sum_{l \in \mathcal{L}_i}\boldsymbol{\Lambda}_{ijl}\boldsymbol{\Phi}_l\right |^2 \notag \\
=& \left(\hat{g}_{ij}+ \sum_{l \in \mathcal{L}_i}\boldsymbol{\Lambda}_{ijl}\boldsymbol{\Phi}_l\right)\left(\hat{g}_{ij}^\star+ \sum_{l \in \mathcal{L}_i} \boldsymbol{\Phi}^H_l\boldsymbol{\Lambda}_{ijl}^H\right)\notag\\
=& \left|\sum_{l \in \mathcal{L}_i}\boldsymbol{\Lambda}_{ijl}\boldsymbol{\Phi}_l \right|^2 + 2\Re\left\{\hat{g}_{ij}\sum_{l \in \mathcal{L}_i}\boldsymbol{\Phi}_l^H\boldsymbol{\Lambda}_{ijl}^H\right\}+| \hat{g}_{ij}|^2  \notag\\
\geq &2\Re\left\{ \left(\hat{g}_{ij}+\sum_{l \in \mathcal{L}_i} \boldsymbol{\Lambda}_{ijl}\tilde{\boldsymbol{\Phi}}_l\right)^\star \sum_{l \in \mathcal{L}_i}\boldsymbol{\Lambda}_{ijl}\boldsymbol{\Phi}_l\right\} \notag \\ 
& \qquad\qquad\qquad\qquad -\left|\sum_{l \in \mathcal{L}_i}\boldsymbol{\Lambda}_{ijl}\tilde{\boldsymbol{\Phi}}_l\right|^2+|\hat{g}_{ij}|^2.
\end{align}
\end{enumerate}
Then we define the following function
\begin{align}\label{definition2}
&\!\tilde{\mathcal{F}}_j(\gamma_j, \beta_j, \boldsymbol{\phi}_i) \!=\! \left(\beta _j\!+\!\gamma_j\right)^2\!-\!2(\tilde{\beta} _j\!-\!\tilde{\gamma}_j)\!\left(\beta _j\!-\!\gamma_j\right)\!+\! (\tilde{\beta} _j\!-\!\tilde{\gamma}_j)^2\notag\\
 &\qquad- 4{P_i}\Bigg( 2\Re\left\{ \left(\hat{g}_{ij}+\sum_{l \in \mathcal{L}_i} \boldsymbol{\Lambda}_{ijl}\tilde{\boldsymbol{\Phi}}_l\right)^\star \sum_{l \in \mathcal{L}_i}\boldsymbol{\Lambda}_{ijl}\boldsymbol{\Phi}_l\right\} \notag \\ 
 & \qquad\qquad\qquad\quad -\left|\sum_{l \in \mathcal{L}_i}\boldsymbol{\Lambda}_{ijl}\tilde{\boldsymbol{\Phi}}_l\right|^2+|\hat{g}_{ij}|^2\Bigg),\forall j \in \mathcal{J}_i.  
\end{align}
\begin{lemma}\label{ubf}
$\tilde{\mathcal{F}}_j(\gamma_j, \beta_j, \boldsymbol{\phi}_i) \geq {\mathcal{F}}_j(\gamma_j, \beta_j, \boldsymbol{\phi}_i)$.
\end{lemma}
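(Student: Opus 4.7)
The plan is to read off the inequality directly from the construction of $\tilde{\mathcal{F}}_j$, observing that $\tilde{\mathcal{F}}_j$ is obtained from $\mathcal{F}_j$ by replacing the two convex subtracted terms with first-order Taylor expansions (affine minorants) at a chosen reference point $(\tilde{\beta}_j,\tilde{\gamma}_j,\tilde{\boldsymbol{\Phi}}_l)$. Since both replaced terms appear with a negative sign in $\mathcal{F}_j$, any valid lower bound for them translates into an upper bound on $\mathcal{F}_j$, and this upper bound is precisely $\tilde{\mathcal{F}}_j$.

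In more detail, I would first decompose $\mathcal{F}_j=(\beta_j+\gamma_j)^2 - A(\beta_j,\gamma_j) - 4P_i\, B(\boldsymbol{\phi}_i)$, with $A(\beta_j,\gamma_j)=(\beta_j-\gamma_j)^2$ and $B(\boldsymbol{\phi}_i)=\bigl|\hat{g}_{ij}+\sum_{l\in\mathcal{L}_i}\boldsymbol{\Lambda}_{ijl}\boldsymbol{\Phi}_l\bigr|^2$. Next, I would invoke the two inequalities already derived in the excerpt, namely (\ref{equality1}) for $A$ and (\ref{equality2}) for $B$, which are exactly the statements that the affine supporting function at the reference point underestimates a convex function globally. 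Both convexity facts are immediate: $A$ is a squared affine form in $(\beta_j,\gamma_j)$, and $B$ is a squared norm of an affine function of $\boldsymbol{\phi}_i$, so each admits its first-order Taylor expansion at $(\tilde{\beta}_j,\tilde{\gamma}_j)$ respectively $\tilde{\boldsymbol{\Phi}}_l$ as a global lower bound.

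Finally, I would substitute these two lower bounds into the expression for $\mathcal{F}_j$. Because both bounds are subtracted, the substitution can only enlarge the value, giving
\[
\mathcal{F}_j(\gamma_j,\beta_j,\boldsymbol{\phi}_i) \;\leq\; (\beta_j+\gamma_j)^2 - \bigl[2(\tilde{\beta}_j-\tilde{\gamma}_j)(\beta_j-\gamma_j)-(\tilde{\beta}_j-\tilde{\gamma}_j)^2\bigr] - 4P_i\,(\text{RHS of (\ref{equality2})}),
\]
which matches the definition (\ref{definition2}) of $\tilde{\mathcal{F}}_j$ term by term. Equality holds at the reference point, so $\tilde{\mathcal{F}}_j$ is in fact a tight majorant, which is the property MM requires later.

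There is no real obstacle here; the entire content is the elementary fact that a convex function lies above any of its tangent hyperplanes. The only thing to double-check is sign bookkeeping, in particular that the minus signs in front of $A$ and $B$ flip "lower bound" into "upper bound," so the assembled expression is indeed $\tilde{\mathcal{F}}_j\geq\mathcal{F}_j$ and not the reverse.
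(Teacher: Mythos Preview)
Your proposal is correct and follows exactly the same route as the paper: the paper's proof simply states that the inequality is immediate from the definitions (\ref{definition1}) and (\ref{definition2}) together with the two Taylor-based lower bounds (\ref{equality1}) and (\ref{equality2}). Your write-up is a more detailed unpacking of that same one-line argument, including the sign bookkeeping and the tightness at the reference point, which the paper leaves implicit.
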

\begin{proof}
It follows from immediately definitions (\ref{definition1}) and (\ref{definition2}), and the inequalities (\ref{equality1}) and (\ref{equality2}).
\end{proof}
Thus, we use the following constraint to approximate constraint (\ref{non-convexconstraint2}).
\begin{equation}\label{approximate constraints}
\tilde{\mathcal{F}}_j(\gamma_j, \beta_j, \boldsymbol{\phi}_i) \leq 0, \forall j \in \mathcal{J}_i.
\end{equation}
Noted that $\tilde{\mathcal{F}}_j(\gamma_j, \beta_j, \boldsymbol{\phi}_i)$ is a sum of a convex function and affine functions, thus the approximate constraint (\ref{approximate constraints}) is convex. We thus obtain an approximate problem as follows.
\begin{subequations}
\begin{align}
\textup{[\rm P2.3]}\ \ \ \underset{ \boldsymbol{\phi}_{i}, \boldsymbol{\gamma}_i ,\boldsymbol{\beta}_i }{\min} \ \ &\sum_{j\in \mathcal{J}_i} \frac{d_j}{\log_2\left(1+ \gamma_j  \right)} \label{objjjjj}  \\
\textup{ s.t.}\ \ \ &\text{(\ref{P2C1}), (\ref{convexconstraint}), and (\ref{approximate constraints})}. \notag
\end{align}
\end{subequations}
The objective function (\ref{objjjjj}), constraints (\ref{convexconstraint}) and (\ref{approximate constraints}) are all convex. Next, we investigate the effect of constraint (\ref{P2C1}).

\subsection{The Impact of Reflection Coefficient Models}

\subsubsection{Ideal}

As $\mathcal{D}_1$ is a convex set, approximate problem P2.3 is a convex problem for this domain. We can find the optimal solution efficiently via existing solvers. Making use of the following remark, the MM method ({a.k.a} Successive Upper-bound Minimization) can be applied to approach P2 and guarantee a local optimum \cite{wu1983convergence}.

\begin{remark}\label{remark}
By Lemma \ref{ubf}, the Lagrange dual function of problem P2.3 is an upper bound of that of P2.2.
\end{remark}

Specifically, we can find a locally optimal solution to P2 by solving a sequence of successive upper-bound approximate problems in form of P2.3. The algorithm for the single-cell case with $ \mathcal{D}_1$ is detailed in Algorithm \ref{al0}.

\begin{algorithm}[tbp]\label{al0}
\caption{Single-cell optimization based on MM} 
\KwIn{$\boldsymbol{\rho}_{-i}, \{d_j, g_{ij}, \boldsymbol{G}_{il}, \boldsymbol{H}_{lj}\}, \forall l \in \mathcal{L}_i, \forall j \in \mathcal{J}_i, \epsilon$;} 
\KwOut{$\rho_i$, $ \boldsymbol{\phi}_i$;}   
Initialize $\{ {\boldsymbol{\phi}}^{(0)}_i, {\boldsymbol{\gamma}}^{(0)}_i,{\boldsymbol{\beta}}^{(0)}_i\}$\;
$t\leftarrow0$\;
\Repeat
{{\rm (\ref{objjjjj}) or (\ref{objjjjjjj}) converges with respect to $\epsilon$}\label{step7}
}
{
$\{ \tilde{\boldsymbol{\phi}}_i, \tilde{\boldsymbol{\gamma}}_i,\tilde{\boldsymbol{\beta}}_i\}\leftarrow\{{\boldsymbol{\phi}}^{(t)}_i, {\boldsymbol{\gamma}}^{(t)}_i,{\boldsymbol{\beta}}^{(t)}_i\}$\;
{
Obtain $\{ {\boldsymbol{\phi}}^{(t+1)}_i, {\boldsymbol{\gamma}}^{(t+1)}_i,{\boldsymbol{\beta}}^{(t+1)}_i\}$ by P2.3 or P2.5\;
}
$t\leftarrow t+1$\;
}
$\rho_i = \sum_{j\in \mathcal{J}_i} \frac{d_j}{\log_2\left(1+ \gamma_j^{(t)}  \right)}$, $\boldsymbol{\phi}_i = \boldsymbol{\phi}_i^{(t)}$\;
\Return {$\rho_i $, $\boldsymbol{\phi}_i $}\;
\end{algorithm} 

\subsubsection{Continuous Phase Shifter}

Domain $ \mathcal{D}_2$ is non-convex, and we handle this issue by the penalty method. The resulting optimization problem can be rewritten as
\begin{subequations}
\begin{align}
\textup{[\rm P2.4]}\ \ \ \underset{ \boldsymbol{\phi}_{i}, \boldsymbol{\gamma}_i ,\boldsymbol{\beta}_i}{\min} \ \ &\sum_{j\in \mathcal{J}_i} \frac{d_j}{\log_2\left(1+ \gamma_j  \right)} - C \sum_{ l \in \mathcal{L}_i, \atop m\in \mathcal{M}}\left(\left|\phi_{lm}\right|^2 -1\right) 
 \label{P2.4obj}\\
\textup{ s.t.}\ \ \ &\text{(\ref{convexconstraint}) and (\ref{approximate constraints})}, \label{P2.4C1}\notag\\
& \left | \phi_m \right | \leq 1, \forall m \in \mathcal{M},\forall l \in \mathcal{L}_i,
\end{align}
\end{subequations}
where $C$ is the penalty parameter. Note that the penalty term $C \sum_{l \in \mathcal{L}_i,m\in \mathcal{M}}(\left|\phi_{lm}\right|^2 -1 )$ enforces that $\left|\phi_{lm}\right|^2 -1 = 0$ for the optimal solution of P2.4. However, the objective function becomes non-convex, and we use the DC programming technique to approximate it. With the first-order Taylor expansion of $\phi_{lm}^\star\phi_{lm}$, we have
\begin{equation}\label{1st}
\left|\phi_{lm}\right|^2 = \phi_{lm}^\star\phi_{lm} \geq 2\Re\left\{\tilde{\phi}_{lm}^\star   \phi_{lm} \right\}-|\tilde{\phi}_{lm} |^2.
\end{equation}
Then P2.4 can be approximated with the following convex problem. 
\begin{subequations}
\begin{align}
\textup{[\rm P2.5]}\ \underset{ \boldsymbol{\phi}_{i}, \boldsymbol{\gamma}_i ,\boldsymbol{\beta}_i}{\min} \ \ &\sum_{j\in \mathcal{J}_i} \frac{d_j}{\log_2\left(1+ \gamma_j  \right)}  - 2 C \sum_{  m \in \mathcal{M}, \atop l \in \mathcal{L}_i}  \Re\left\{\tilde{\phi}_{lm}^\star   \phi_{lm} \right\}\label{objjjjjjj}  \\
\textup{ s.t.}\ \ \ & \text{(\ref{convexconstraint}), (\ref{approximate constraints}), and (\ref{P2.4C1})} \notag.
\end{align}
\end{subequations}
Note that the constant term $C(|\tilde{\phi}_{lm} |^2+1)$ is not stated explicitly in (\ref{objjjjjjj}) as it has no impact on optimum.

We obtain a locally optimal solution to P2 with domain $\mathcal{D}_2$ by the MM method that solves a sequence of problems in form of P2.5. The algorithm is given in Algorithm \ref{al0}, as all major algorithmic steps for $\mathcal{D}_1$ remain.

\subsubsection{Discrete Phase Shifter}

If $\mathcal{D} = \mathcal{D}_3$, the single-cell problem P2 is NP-hard. 

\begin{proposition}\label{NP-hard}
	The single-cell problem P2 is NP-hard when $\mathcal{D} = \mathcal{D}_3$.
\end{proposition}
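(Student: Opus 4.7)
The plan is to establish NP-hardness of P2 with $\mathcal{D}=\mathcal{D}_3$ by polynomial-time reduction from 3-SAT.

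Given a 3-SAT formula on $M$ Boolean variables with $J$ clauses $C_j=\ell_{j,1}\lor\ell_{j,2}\lor\ell_{j,3}$, I would construct the following P2 instance: set $N=2$ in $\mathcal{D}_3$ so that each reflection coefficient lies in $\{+1,-1\}$; silence the neighbouring cells, $\rho_k=0$ for all $k\neq i$; deploy a single RIS with $M$ elements in cell $i$, so that $|\mathcal{L}_i|=1$ and $\mathcal{L}\setminus\mathcal{L}_i=\emptyset$; create one UE per clause, so that $|\mathcal{J}_i|=J$, with demand $d_j=1$; and fix $P_i=\sigma^2=1$. Identifying $\phi_m\in\{+1,-1\}$ with the truth value $x_m$ of Boolean variable $m$, I would choose $g_{ij}=3$, $\boldsymbol{G}_{il}=(1,1,\ldots,1)$, and $H_{lj,m}=s_{j,k}$ if $m=v_{j,k}$ (where $v_{j,k}$ is the variable index of literal $\ell_{j,k}$, and $s_{j,k}=+1$ if $\ell_{j,k}$ is un-negated and $-1$ otherwise), and $H_{lj,m}=0$ for all other $m$. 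The silent-neighbours and single-RIS-in-cell-$i$ conditions force $\hat g_{ij}=g_{ij}=3$, so
\[
\left|\hat g_{ij}+\boldsymbol{\Lambda}_{ijl}\boldsymbol{\Phi}_l\right|^{2}=\left(3+\sum_{k=1}^{3}s_{j,k}x_{v_{j,k}}\right)^{2},
\]
which equals $0$ exactly when every literal of $C_j$ is false and is at least $4$ otherwise.

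Substituting into (\ref{SSINR}) with zero interference yields $\text{SINR}_j^{\langle s\rangle}=\left|\hat g_{ij}+\boldsymbol{\Lambda}_{ijl}\boldsymbol{\Phi}_l\right|^{2}$, and the P2 objective $\sum_{j=1}^{J} 1/\log_2(1+\text{SINR}_j^{\langle s\rangle})$ is finite iff every $\text{SINR}_j^{\langle s\rangle}>0$, iff every clause is simultaneously satisfied by the assignment encoded by $\boldsymbol{\phi}_i$, iff the input 3-SAT formula is satisfiable; otherwise, since $\log_2(1+0)=0$, at least one summand diverges. The construction uses $J$ UEs and polynomially-sized real channel entries, so the reduction runs in polynomial time, and deciding whether the P2 optimum is at most the polynomial threshold $J/\log_2 5$ is therefore NP-hard. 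NP-hardness of computing the P2 optimum itself follows because any oracle that returns the optimum also decides feasibility by checking whether the returned value is finite.

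The main subtlety is to verify that the construction respects the structural conventions of the single-cell model: the effective gain $\hat g_{ij}$ genuinely collapses to $g_{ij}$ (ensured by $\mathcal{L}\setminus\mathcal{L}_i=\emptyset$), the objective diverges whenever any $\text{SINR}_j^{\langle s\rangle}$ vanishes (a direct consequence of the division by $\log_2(1+\cdot)$), and the reduction's validity extends from feasibility to optimization via the finite-versus-infinite dichotomy. Presenting the reduction for the smallest admissible value $N=2$ suffices to establish NP-hardness of P2 with domain $\mathcal{D}_3$, since the proposition asserts hardness of the family rather than uniformity over every $N\geq 2$.
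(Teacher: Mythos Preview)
Your reduction is correct and follows essentially the same route as the paper's proof: both reduce from 3-SAT by mapping Boolean variables to binary phase choices and clauses to UEs, arranging the channels so that the received signal at a UE vanishes precisely when all three literals of its clause are false, which makes the objective diverge and hence ties feasibility of P2 to satisfiability. The only cosmetic differences are that the paper uses $m$ single-element RISs and a physical crest/trough narrative, whereas you use one $M$-element RIS with explicit numerical channel entries and the concrete threshold $J/\log_2 5$; mathematically these are equivalent constructions.
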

\begin{proof}
	Please refer to Appendix.
\end{proof}

As P2 with $\mathcal{D}_3$ is NP-hard, we aim to get an approximate solution to this case. We first obtain the solution to P2 with $\mathcal{D}_2$, $\phi_{lm}^{\left \langle \mathcal{D}_2 \right \rangle}$ $( \forall m \in \mathcal{M},\forall l \in \mathcal{L}_i)$, by Algorithm \ref{al0}. Then we make use of the following rounding equation to obtain an approximate solution $\phi_{lm}^{\left \langle \mathcal{D}_3 \right \rangle}$ $( \forall m \in \mathcal{M},\forall l \in \mathcal{L}_i)$ in this case.
\begin{equation}\label{rounding}
\phi_{lm}^{\left \langle \mathcal{D}_3 \right \rangle} = \underset{\phi_{lm}\in\mathcal{D}_3}{\text{argmin}} \left| \phi_{lm} -\phi_{lm}^{\left \langle \mathcal{D}_2 \right \rangle}\right|, \forall m \in \mathcal{M},\forall l \in \mathcal{L}_i,
\end{equation}
 
\subsection{Complexity Analysis}

Algorithm \ref{al0} includes two parts: 1) An iterative process based on the MM method, and 2) solving the convex approximate problems. We first analyze the computation complexity of solving problem P2.3 by a standard interior-point method in \cite{ben2001lectures}. There are $M \times |\mathcal{L}_i|$ SOC constraints of size two (in the number of variables), $|\mathcal{J}_i|$ SOC constraints of size $2\times M\times|\mathcal{L}_i|+1$, and $|\mathcal{J}_i|$ SOC constraints of size $M\times|\mathcal{L}_i| + 2$, with $2\times M\times|\mathcal{L}_i| + 2\times|\mathcal{J}_i|$ optimization variables. To simplify the notation, let $Q = M \times |\mathcal{L}_i| $ and $T = |\mathcal{J}_i|$. From \cite{6891348}, the complexity of solving P2.3 is thus given by 
\begin{align}
&\sqrt{Q+2T}\left(2Q+2T\right)\Big(4Q+T\left(2Q+1\right)^2 \notag\\
 & \qquad+T\left(Q+2\right)^2+\left(2Q+2T\right)^2\Big) = \mathcal{O}(Q^{3.5}T^{2.5}).
\end{align}
Similarly, the computation complexity of solving problem P2.5 is also of $\mathcal{O}(Q^{3.5}T^{2.5})$. Additionally, as the complexity of the iterative process based on the MM method is of $\mathcal{O}\left(ln(1/\epsilon)\right)$, where $\epsilon$ is the convergence tolerance in Step \ref{step7}, and the overall complexity of Algorithm \ref{al0} is of $\mathcal{O}\left(Q^{3.5}P^{2.5}\ln(1/\epsilon)\right)$


\section{Multi-cell Load Optimization} \label{Sec:multi-cell}

This section proposes an algorithmic framework for the multi-cell problem and then discuss its convergence. 

In the last section, we have solved the single-cell problem via algorithm \ref{al0}, and a locally optimal load of cell $i$ can be obtained, when loads of other cells $\boldsymbol{\rho}_{-i}$ and the reflection coefficients of RISs at the other cells $\boldsymbol{\phi}_{-i}$ are given. We aim to solve the multi-cell problem by solving iteratively the single-cell problem. However, Algorithm \ref{al0} will obtain different locally optimal solutions, with the different initial parameters. If we directly embed it into an algorithmic framework based on the fixed-point method like \cite{siomina2012analysis, 8353846}, the convergence will not be guaranteed. To avoid this issue, the initial parameters of Algorithm \ref{al0} should be pre-determined instead of random in the algorithmic framework. Therefore, we use $f_i\left(\boldsymbol{\rho}_{-i},\boldsymbol{\phi}_{-i}, \Psi_i \right) $ to represent the process of obtaining $\rho_i$ by the Algorithm \ref{al0} with the pre-determined initial parameters, {i.e.},
\begin{equation}
f_i\left(\boldsymbol{\rho}_{-i},\boldsymbol{\phi}_{-i}, \Psi_i \right) = \rho_i, \\
\end{equation}
where $\rho_i$ is obtained by Algorithm \ref{al0} with the initial parameters $\Psi_i =  \{ {\boldsymbol{\phi}}^{(0)}_i, {\boldsymbol{\gamma}}^{(0)}_i,{\boldsymbol{\beta}}^{(0)}_i\}$.

\begin{lemma}\label{lemma3}
$f_i\left(\boldsymbol{\rho}_{-i},\boldsymbol{\phi}_{-i}, \Psi_i \right)$ is well-defined.
\end{lemma}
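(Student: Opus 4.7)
The plan is to show that, for every admissible input triple $(\boldsymbol{\rho}_{-i}, \boldsymbol{\phi}_{-i}, \Psi_i)$, Algorithm \ref{al0} produces a uniquely determined output $\rho_i$. I would split this into three checks: (i) recursive feasibility, so every surrogate P2.3 (or P2.5) has a solution; (ii) termination, so Step \ref{step7} fires after finitely many iterations; and (iii) determinism, so the output depends only on the inputs.

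For (i), I would argue by induction on $t$. Assume the current iterate $(\boldsymbol{\phi}_i^{(t)}, \boldsymbol{\gamma}_i^{(t)}, \boldsymbol{\beta}_i^{(t)})$ is feasible for P2.2; this is arranged at $t=0$ by choosing $\boldsymbol{\gamma}_i^{(0)}$ and $\boldsymbol{\beta}_i^{(0)}$ from the SINR and interference-plus-noise induced by $\boldsymbol{\phi}_i^{(0)}$. Because the Taylor expansions (\ref{equality1}) and (\ref{equality2}) are tight at the expansion point, $\tilde{\mathcal{F}}_j$ coincides with $\mathcal{F}_j$ there, so the current iterate is also feasible for the next surrogate; consequently P2.3 (or P2.5) has a non-empty convex feasible set and, thanks to boundedness and continuity of the objective, admits an optimum $(\boldsymbol{\phi}_i^{(t+1)}, \boldsymbol{\gamma}_i^{(t+1)}, \boldsymbol{\beta}_i^{(t+1)})$. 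By Lemma \ref{ubf}, $\mathcal{F}_j \leq \tilde{\mathcal{F}}_j \leq 0$ at this new iterate, so it is again feasible for P2.2, closing the induction.

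For (ii), Lemma \ref{ubf} is precisely the majorization inequality of MM, and the surrogate is tight at the expansion point, so the descent lemma of MM applies. Hence the original objective $\sum_{j \in \mathcal{J}_i} d_j/\log_2(1 + \gamma_j^{(t)})$ is non-increasing in $t$ and bounded below by $0$, so it converges; the tolerance test at Step \ref{step7} is therefore satisfied in finitely many iterations by the standard argument of \cite{wu1983convergence}. For (iii), once $\Psi_i$ and a deterministic convex solver (with a fixed tie-breaking rule) are pinned down, the trajectory $\{(\boldsymbol{\phi}_i^{(t)}, \boldsymbol{\gamma}_i^{(t)}, \boldsymbol{\beta}_i^{(t)})\}$ is a deterministic function of the inputs, and so is the terminating value $\rho_i$.

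I expect the only non-routine step to be (i), since inner approximations of a non-convex feasible set can a priori turn out to be empty. The saving observation is the tightness of the first-order Taylor expansions at the expansion point, which transfers feasibility of P2.2 from the current iterate to the new surrogate and back again; once that loop is closed, convergence and uniqueness are standard consequences of the MM framework and require only boundedness of the objective below, which is clear here.
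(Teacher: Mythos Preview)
Your argument is correct and follows the same MM-convergence route as the paper, which simply notes that Algorithm~\ref{al0} solves a chain of convex problems whose solutions feed forward as initial points and that MM always converges. Your version is considerably more detailed---the recursive-feasibility check (i) via tightness of the Taylor surrogates at the expansion point, and the determinism check (iii) via a fixed solver/tie-breaking rule, are points the paper leaves implicit---but these are elaborations of the same underlying idea rather than a different proof strategy.
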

\begin{proof}
Note that $\rho_i$ is obtained by solving a sequence of convex problems, where the solution to a problem is the initial point of the next problem instance. It is clear that Algorithm \ref{al0} will always converge by the construction of the MM method, and the same holds if it is applied repeatedly for a finite number of times. Therefore, $f_i\left(\boldsymbol{\rho}_{-i},\boldsymbol{\phi}_{-i}, \Psi_i \right)$ is well-defined.
\end{proof}

With $f_i\left(\boldsymbol{\rho}_{-i},\boldsymbol{\phi}_{-i}, \Psi_i \right)$ and Lemma \ref{lemma3}, we then propose an algorithmic framework based on the following iteration to obtain a locally optimal solution.
\begin{align}\label{iterationequality}
&\boldsymbol{\rho}^{(\tau+1)}= \Big[f_1\left(\boldsymbol{\rho}_{-1}^{(\tau)},\boldsymbol{\phi}_{-1}^{(\tau)}, \Psi_1^{(\tau)} \right),  f_2\left(\boldsymbol{\rho}_{-2}^{(\tau)},\boldsymbol{\phi}_{-2}^{(\tau)}, \Psi_2^{(\tau)} \right), \notag \\
& \qquad\qquad\qquad\qquad\qquad\quad ...,f_I\left(\boldsymbol{\rho}_{-I}^{(\tau)},\boldsymbol{\phi}_{-I}^{(\tau)}, \Psi_I^{(\tau)} \right)\Big]^T.
\end{align}
The algorithmic framework is detailed in Algorithm \ref{al1}, named iterative convex approximation (ICA) algorithm.

\begin{algorithm}[tbp]\label{al1}
\caption{Iterative convex approximation (ICA)} 
\KwIn{ $\{d_j, g_{ij}, \boldsymbol{G}_{il}, \boldsymbol{H}_{lj}\}, \forall i \in \mathcal{I}, \forall l \in \mathcal{L}, \forall j \!\in \mathcal{J}, \varepsilon$;} 
\KwOut{$\boldsymbol{\rho}$, $ \boldsymbol{\phi}$;}   
Initialize $\boldsymbol{\phi}^{(0)}$\label{step1}\;

Obtain $\boldsymbol{\rho}^{(0)}, \boldsymbol{\gamma}^{(0)} = \{ \boldsymbol{\gamma}_1^{(0)}, \boldsymbol{\gamma}_2^{(0)}, ..., \boldsymbol{\gamma}_I^{(0)}\}$, and $\boldsymbol{\beta}^{(0)} = \{ \boldsymbol{\beta}_1^{(0)}, \boldsymbol{\beta}_2^{(0)}, ..., \boldsymbol{\beta}_I^{(0)}\}$ from solving the multi-cell problem with $\boldsymbol{\phi}^{(0)}$\label{step2}\;
$\tau \leftarrow0$\;
\Repeat
{$||\boldsymbol{\rho}^{(\tau)}- \boldsymbol{\rho}^{(\tau-1)}||_{\infty} \leq \varepsilon$
}
{
\For{{\rm cell} $i$, $\forall i \in \mathcal{I}$}{${\rho}_i^{(\tau+1)} = f_i\left(\boldsymbol{\rho}_{-i}^{(\tau)},\boldsymbol{\phi}_{-i}^{(\tau)}, \Psi_i^{(\tau)} \right)$\;\label{s5}  $\boldsymbol{\phi}_{-i}^{(\tau+1)}\leftarrow \boldsymbol{\phi}_{-i}^{(\tau)}$ retrieved from Step \ref{s5}\; $\Psi_i^{(\tau+1)}\leftarrow \Psi_i^{(\tau)}$ retrieved from Step \ref{s5}\; }
$\tau \leftarrow \tau+1$\;
}
$\boldsymbol{\rho} = \boldsymbol{\rho}^{(t)}$, $\boldsymbol{\phi}=\boldsymbol{\phi}^{(t)}$\;
\Return {$\boldsymbol{\rho} $, $\boldsymbol{\phi}$}\;
\end{algorithm} 

Steps \ref{step1} and \ref{step2} are for obtaining appropriate initial parameters. We first give $\boldsymbol{\phi}^{(0)}$, and our problem be in form of the problem in \cite{siomina2012analysis}, then the $\boldsymbol{\rho}^{(0)}$ can be easily obtained, and the $\boldsymbol{\gamma}^{(0)}$ and $\boldsymbol{\beta}^{(0)}$ also can be calculated. The convergence of ICA is shown below.

\begin{lemma}\label{lemma4}
$f_i\left(\boldsymbol{\rho}_{-i},\boldsymbol{\phi}_{-i}, \Psi_i \right) $ is a monotonically increasing function of $\boldsymbol{\rho}_{-i}$.
\end{lemma}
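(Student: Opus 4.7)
The plan is to trace the dependence of Algorithm~\ref{al0} on $\boldsymbol{\rho}_{-i}$ and argue that, with all other inputs fixed, no quantity produced by the algorithm can decrease when a component of $\boldsymbol{\rho}_{-i}$ grows. Fix $\boldsymbol{\phi}_{-i}$ and $\Psi_i$, and consider two parameter vectors $\boldsymbol{\rho}_{-i}^a \leq \boldsymbol{\rho}_{-i}^b$ componentwise. The first observation is that $\boldsymbol{\rho}_{-i}$ enters the formulation only through the interference aggregate in constraint~(\ref{convexconstraint}): the left-hand side $\sum_{k \neq i}|\hat{g}_{kj} + \sum_{l \in \mathcal{L}_i}\boldsymbol{\Lambda}_{kjl}\boldsymbol{\Phi}_l|^2 P_k\rho_k + \sigma^2$ is componentwise non-decreasing in $\boldsymbol{\rho}_{-i}$ for every $\boldsymbol{\phi}_i$, so the feasible set of $(\boldsymbol{\phi}_i,\boldsymbol{\gamma}_i,\boldsymbol{\beta}_i)$ shrinks as $\boldsymbol{\rho}_{-i}$ grows. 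Combined with (\ref{non-convexconstraint}), which forces $\beta_j\gamma_j \leq P_i|\hat{g}_{ij}+\cdots|^2$, a larger required $\beta_j$ drives the admissible $\gamma_j$ downward and hence makes the objective $\sum_{j\in\mathcal{J}_i} d_j / \log_2(1+\gamma_j)$ at least as large.

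For the MM iteration itself, I would proceed by induction on the iteration counter $t$ to show that the objective value produced in run $b$ is always at least that in run $a$. The base case $t=0$ is immediate because both runs use the same $\Psi_i$. At iteration $t+1$ both runs solve P2.3 (or, under $\mathcal{D}_2$, P2.5) using their respective linearization points from iteration $t$; the only place $\boldsymbol{\rho}_{-i}$ appears is in (\ref{convexconstraint}), since the surrogate $\tilde{\mathcal{F}}_j$ defined in (\ref{definition2}) is independent of it. With identical linearization points, the argument in the previous paragraph already yields the desired inequality between subproblem optima. To propagate this through the iterations I would invoke the descent property of MM---by Lemma~\ref{ubf} the true single-cell objective $\sum_j d_j/\log_2(1+\text{SINR}_j^{\left\langle s\right\rangle}(\boldsymbol{\phi}_i))$ is non-increasing along the iterates for each fixed $\boldsymbol{\rho}_{-i}$---combined with the pointwise monotonicity of that objective in $\boldsymbol{\rho}_{-i}$ for every fixed $\boldsymbol{\phi}_i$. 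Passing to the limit using Lemma~\ref{lemma3} then gives $f_i(\boldsymbol{\rho}_{-i}^a,\boldsymbol{\phi}_{-i},\Psi_i) \leq f_i(\boldsymbol{\rho}_{-i}^b,\boldsymbol{\phi}_{-i},\Psi_i)$.

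The hard part will be the induction step once the two MM trajectories' linearization points have drifted apart; strictly speaking one can no longer compare the two subproblems term-by-term. The cleanest workaround I have in mind is a coupling argument: at each iteration, take the iterate produced by run $a$ as a candidate point for run $b$'s subproblem, verify its feasibility under the tighter version of (\ref{convexconstraint}) by enlarging the auxiliary $\beta_j$ as dictated by the componentwise monotonicity of the left-hand side, and then compare the resulting objective values. This reduces the monotonicity of $f_i$ to the elementary monotonicity of a single subproblem's optimal value, already established in the first paragraph, and avoids having to analyze the joint dynamics of the two trajectories.
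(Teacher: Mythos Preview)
Your plan is far more detailed than the paper's. The paper dispenses with the MM machinery entirely: it simply observes that with $\boldsymbol{\phi}_{-i}$ and $\Psi_i$ fixed, the single-cell problem collapses to the load-coupling setting of \cite{siomina2012analysis}, and then invokes the monotonicity result proved there. No induction on MM iterations, no comparison of subproblems.

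Your coupling argument, however, has a genuine gap, and it is precisely at the point you flagged as ``the hard part.'' At iteration $t+1$ the two runs solve subproblems whose surrogate constraint (\ref{approximate constraints}) is built from \emph{different} linearization points $(\tilde{\boldsymbol{\beta}}_i,\tilde{\boldsymbol{\gamma}}_i,\tilde{\boldsymbol{\phi}}_i)$, namely each run's own previous iterate. Inflating $\beta_j$ in run~$a$'s solution only takes care of (\ref{convexconstraint}); it does nothing to guarantee feasibility for run~$b$'s version of $\tilde{\mathcal{F}}_j\le 0$, since the coefficients in (\ref{definition2}) come from $\tilde{\Psi}^b\neq\tilde{\Psi}^a$. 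Worse, even if you could force feasibility, the inequality you would extract points the wrong way: exhibiting a feasible candidate for run~$b$'s subproblem yields an \emph{upper} bound on run~$b$'s optimal value, whereas to conclude $f_i(\boldsymbol{\rho}_{-i}^a,\cdot)\le f_i(\boldsymbol{\rho}_{-i}^b,\cdot)$ you need a \emph{lower} bound on it. Reversing the roles---feeding run~$b$'s optimizer into run~$a$'s subproblem---runs into the same obstruction, because run~$a$'s surrogate is linearized at yet another point. The MM descent property (via Lemma~\ref{ubf}) controls each trajectory in isolation; it does not couple the two trajectories to one another, so the induction does not close as written.

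If you want to salvage an argument along these lines, you would need to bypass the surrogate subproblems and work directly with the true single-cell objective $\sum_{j\in\mathcal{J}_i} d_j/\log_2(1+\text{SINR}_j^{\langle s\rangle}(\boldsymbol{\phi}_i))$, which is pointwise monotone in $\boldsymbol{\rho}_{-i}$ for every fixed $\boldsymbol{\phi}_i$---this is essentially the content of the citation the paper relies on.
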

\begin{proof}
When $\boldsymbol{\phi}_{-i}$ and $\Psi_i$ are fixed, our single-cell problem P2 will degenerate into the single-cell problem in \cite{siomina2012analysis}. In \cite{siomina2012analysis}, it has been proved that the corresponding function is monotonically increasing with increasing $\boldsymbol{\rho}_{-i}$.
\end{proof}

\begin{lemma}\label{lemma5}
$\rho_i^{(1)} \leq \rho_i^{(0)}, \forall i \in \mathcal{I}$ holds in ICA.
\end{lemma}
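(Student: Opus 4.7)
The plan is to exhibit $\Psi_i^{(0)}=\{\boldsymbol{\phi}_i^{(0)}, \boldsymbol{\gamma}_i^{(0)}, \boldsymbol{\beta}_i^{(0)}\}$ as a feasible starting point of problem P2.2 (with parameters $\boldsymbol{\rho}_{-i}^{(0)}$ and $\boldsymbol{\phi}_{-i}^{(0)}$) whose P2.2 objective value equals $\rho_i^{(0)}$, and then to invoke the descent property of the MM inner loop of Algorithm \ref{al0} to conclude $\rho_i^{(1)}\le\rho_i^{(0)}$.

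I would first unpack Step \ref{step2} of Algorithm \ref{al1}. With $\boldsymbol{\phi}^{(0)}$ fixed, $\boldsymbol{\rho}^{(0)}$ solves a load-coupling problem of the type in \cite{siomina2012analysis}, and $\beta_j^{(0)}$ and $\gamma_j^{(0)}$ are computed respectively as the exact interference-plus-noise and the exact SINR at $(\boldsymbol{\phi}^{(0)},\boldsymbol{\rho}^{(0)})$. Consequently the SOC constraint (\ref{convexconstraint}) and the non-convex SINR constraint (\ref{non-convexconstraint}) both hold with equality at $\Psi_i^{(0)}$, and the domain constraint (\ref{P2C1}) is satisfied by construction of $\boldsymbol{\phi}_i^{(0)}$; via (\ref{rhoi}), the P2.2 objective evaluated at $\Psi_i^{(0)}$ reduces to $\sum_{j\in\mathcal{J}_i} d_j/\log_2(1+\gamma_j^{(0)}) = \rho_i^{(0)}$.

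Next I would apply the standard MM descent argument. At the expansion point, the linearisations (\ref{equality1}) and (\ref{equality2}) are tight, so $\tilde{\mathcal{F}}_j(\Psi_i^{(0)}) = \mathcal{F}_j(\Psi_i^{(0)})\le 0$, making $\Psi_i^{(0)}$ feasible for the first P2.3 instance. Solving P2.3 then yields a new iterate whose (common) objective is no larger than $\rho_i^{(0)}$. By Lemma \ref{ubf}, the inequality $\tilde{\mathcal{F}}_j\ge\mathcal{F}_j$ ensures that every P2.3-feasible point is also P2.2-feasible, so the same argument iterates and the P2.2 objective is non-increasing across the MM loop. The value returned at termination, $\rho_i^{(1)}$, therefore satisfies $\rho_i^{(1)}\le\rho_i^{(0)}$.

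The main obstacle I foresee is the $\mathcal{D}_2$ case, where the MM loop solves the penalty-augmented problem P2.5 instead of P2.3. If $\boldsymbol{\phi}^{(0)}$ is initialised with unit-modulus entries, the penalty term $-C\sum_{l,m}(|\phi_{lm}|^2-1)$ vanishes at $\Psi_i^{(0)}$; because the additional constraint $|\phi_{lm}|\le 1$ keeps this penalty non-negative at every subsequent iterate, the monotone descent of the composite P2.4 objective still implies $\rho_i^{(1)}\le\rho_i^{(0)}$. For $\mathcal{D}_3$, the lemma pertains to the pre-rounding output of Algorithm \ref{al0}, and the same argument carries over unchanged.
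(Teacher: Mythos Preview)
Your proposal is correct and follows the same approach as the paper: both arguments rest on the MM descent property, namely that the initial point $\Psi_i^{(0)}$ is feasible for the surrogate problem with objective value $\rho_i^{(0)}$, so the output of Algorithm~\ref{al0} cannot exceed it. The paper compresses this into a single sentence invoking ``the construction of the MM method,'' whereas you spell out the feasibility verification and the domain-specific cases ($\mathcal{D}_2$ penalty, $\mathcal{D}_3$ rounding) that the paper leaves implicit.
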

\begin{proof}
In ICA, $\boldsymbol{\rho}^{(1)}$ is obtained by (\ref{iterationequality}) based on the MM method with $\boldsymbol{\rho}^{(0)}$ as the initial parameter. By the construction of the MM method, the obtained locally optimal solution can not be worse than the initial one, i.e., $\rho_i^{(1)} \leq \rho_i^{(0)}, \forall i \in \mathcal{I}$.
\end{proof}

\begin{proposition}
ICA is convergent.
\end{proposition}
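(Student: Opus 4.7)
The plan is to show that the sequence $\{\boldsymbol{\rho}^{(\tau)}\}_{\tau\geq 0}$ produced by ICA is componentwise monotone non-increasing and bounded below by zero; convergence will then follow coordinatewise from the Monotone Convergence Theorem, and the stopping criterion $\|\boldsymbol{\rho}^{(\tau)}-\boldsymbol{\rho}^{(\tau-1)}\|_\infty\leq\varepsilon$ is reached in finitely many iterations for any $\varepsilon>0$.

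I would prove the monotonicity claim $\boldsymbol{\rho}^{(\tau+1)}\leq\boldsymbol{\rho}^{(\tau)}$ by induction on $\tau$. The base case is exactly Lemma \ref{lemma5}. For the inductive step, fix any cell $i\in\mathcal{I}$ and recall that, by Step \ref{s5}, $\rho_i^{(\tau+1)}=f_i(\boldsymbol{\rho}_{-i}^{(\tau)},\boldsymbol{\phi}_{-i}^{(\tau)},\Psi_i^{(\tau)})$ is obtained by running Algorithm \ref{al0} warm-started from $\Psi_i^{(\tau)}=\{\boldsymbol{\phi}_i^{(\tau)},\boldsymbol{\gamma}_i^{(\tau)},\boldsymbol{\beta}_i^{(\tau)}\}$, i.e. the triple that was output at the previous iteration under the larger interference profile $\boldsymbol{\rho}_{-i}^{(\tau-1)}\geq\boldsymbol{\rho}_{-i}^{(\tau)}$. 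At this starting triple, the value of the objective $\sum_{j\in\mathcal{J}_i} d_j/\log_2(1+\gamma_j)$ is precisely $\rho_i^{(\tau)}$.

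The crucial observation is that this warm start remains feasible for the P2.3 (respectively P2.5) instance rebuilt with the reduced interference $\boldsymbol{\rho}_{-i}^{(\tau)}$: the left-hand side of constraint (\ref{convexconstraint}) is linearly increasing in $\boldsymbol{\rho}_{-i}$ and so only gets relaxed when $\boldsymbol{\rho}_{-i}$ shrinks, while the approximate constraint (\ref{approximate constraints}) and the domain constraint (\ref{P2C1}) do not involve $\boldsymbol{\rho}_{-i}$ at all. Because the MM scheme in Algorithm \ref{al0} is a descent method on $\sum_j d_j/\log_2(1+\gamma_j)$ (standard argument of \cite{wu1983convergence}, and noted in Remark \ref{remark}), its output cannot exceed the value at the starting point, giving $\rho_i^{(\tau+1)}\leq\rho_i^{(\tau)}$. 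Ranging $i$ over $\mathcal{I}$ completes the induction; together with $\rho_i\geq 0$, convergence of $\boldsymbol{\rho}^{(\tau)}$ follows.

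The main obstacle I expect is precisely the warm-start feasibility check: one has to be careful that the triple $\Psi_i^{(\tau)}$, which was generated as the MM expansion point under interference $\boldsymbol{\rho}_{-i}^{(\tau-1)}$, is still admissible for the single-cell subproblem under the smaller interference $\boldsymbol{\rho}_{-i}^{(\tau)}$, since both $\boldsymbol{\gamma}_i^{(\tau)}$ and $\boldsymbol{\beta}_i^{(\tau)}$ were calibrated to the old interference level. Lemma \ref{lemma4} is not strictly required for the argument above, but it provides an alternative fixed-point style justification in the spirit of \cite{siomina2012analysis,8353846} once the MM descent has been used to reduce the problem to the standard load-coupling setting; noting this connection is a natural way to reinforce the proof.
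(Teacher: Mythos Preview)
Your approach matches the paper's: establish componentwise monotonicity $\rho_i^{(\tau+1)}\le\rho_i^{(\tau)}$ by induction (base case Lemma~\ref{lemma5}) and conclude from non-negativity of the loads. The paper compresses the entire inductive step into the single line ``From Lemmas~\ref{lemma4} and~\ref{lemma5}, for any $\tau>0$ and $\forall i\in\mathcal{I}$, $\rho_i^{(\tau+1)}\le\rho_i^{(\tau)}$ holds''; your warm-start feasibility argument (reducing $\boldsymbol{\rho}_{-i}$ only relaxes (\ref{convexconstraint})) together with MM descent is precisely the unpacking of that line, and your remark that Lemma~\ref{lemma4} is dispensable is accurate---its role in the paper's argument is exactly what your constraint-relaxation observation supplies.
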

\begin{proof}
From Lemmas \ref{lemma4} and \ref{lemma5}, for any $\tau > 0$ and $\forall i \in \mathcal{I}$, ${\rho}^{(\tau+1)}_i \leq {\rho}^{(\tau)}_i $ holds. Therefore, the total load is monotonically decreasing over iterations in ICA, the convergence follows then from that the load levels are non-negative.
\end{proof}

A particular interesting aspect for the multi-cell problem is under what condition it can be solved to global optimum with our multi-cell algorithm. As an interesting finding, our algorithm carries the property of converging to global optimum, if the single-cell problem can be solved to optimum.

\begin{theorem}\label{additional}
	The global optimum of the multi-cell problem can be obtained if the single-cell problem can be solved to optimum.
\end{theorem}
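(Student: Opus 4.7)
The plan is to use the global single-cell oracle together with the monotonicity already developed in Lemmas~\ref{lemma4} and \ref{lemma5}, and with compactness of the reflection-coefficient domain, to sandwich the ICA limit against a global optimum of P1.

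Let $(\boldsymbol{\rho}^*, \boldsymbol{\phi}^*)$ denote a global optimum of P1 with value $Z^* = \sum_{i \in \mathcal{I}} \rho_i^*$. First I would record what changes when P2 is solved to optimum: the update $f_i(\boldsymbol{\rho}_{-i}, \boldsymbol{\phi}_{-i}, \Psi_i)$ becomes independent of the MM warm-start $\Psi_i$ and reduces to a well-defined map $h_i(\boldsymbol{\rho}_{-i}, \boldsymbol{\phi}_{-i})$ returning the true single-cell optimum. Lemma~\ref{lemma4} still applies (now to $h_i$); Lemma~\ref{lemma5} still gives $h_i(\boldsymbol{\rho}^{(\tau)}_{-i}, \boldsymbol{\phi}^{(\tau)}_{-i}) \le \rho_i^{(\tau)}$, because the incumbent $\boldsymbol{\phi}_i^{(\tau)}$ is a single-cell feasible candidate already attaining $\rho_i^{(\tau)}$; and $h_i$ inherits continuity from the convex subproblem P2. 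Consequently $\{\boldsymbol{\rho}^{(\tau)}\}$ is componentwise non-increasing and bounded below by zero, so it converges to some $\boldsymbol{\rho}^\infty$; compactness of the RIS domain yields a subsequential limit $\boldsymbol{\phi}^\infty$, and continuity gives the fixed-point identity $\rho_i^\infty = h_i(\boldsymbol{\rho}^\infty_{-i}, \boldsymbol{\phi}^\infty_{-i})$ for every $i \in \mathcal{I}$; feasibility of the iterates carries to the limit, so $\sum_i \rho_i^\infty \ge Z^*$.

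Next I would establish the matching upper bound $\sum_i \rho_i^\infty \le Z^*$. The key is to consider a companion ICA trajectory started at $(\boldsymbol{\rho}^*, \boldsymbol{\phi}^*)$. Since $\boldsymbol{\phi}_i^*$ is admissible for the single-cell problem and attains $\rho_i^*$, global single-cell optimality gives $h_i(\boldsymbol{\rho}^*_{-i}, \boldsymbol{\phi}^*_{-i}) \le \rho_i^*$; a straightforward induction using monotonicity of $h_i$ in $\boldsymbol{\rho}_{-i}$ shows the companion iterates remain componentwise at or below $\boldsymbol{\rho}^*$, so its limit has sum at most $Z^*$ and, by feasibility, sum exactly $Z^*$. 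It remains to argue that every fixed point of $h_i$ reachable by the monotone ICA recursion has the same total value $Z^*$: if some fixed point $(\boldsymbol{\rho}^\infty, \boldsymbol{\phi}^\infty)$ had $\sum_i \rho_i^\infty > Z^*$, then for at least one cell $i$ we would have $\rho_i^\infty > \rho_i^*$, and substituting $\boldsymbol{\phi}_{-i}^*$ into cell $i$'s single-cell instance would, via the global oracle, produce a strictly smaller value than $\rho_i^\infty$; propagating this through the feasibility-preserving recursion contradicts either the fixed-point equation or the global optimality of $Z^*$. Combining the two bounds gives $\sum_i \rho_i^\infty = Z^*$, so $(\boldsymbol{\rho}^\infty, \boldsymbol{\phi}^\infty)$ is a global optimum of P1.

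The main obstacle is this very last comparison across fixed points at different $\boldsymbol{\phi}_{-i}$ configurations: Lemma~\ref{lemma4} only controls the dependence of $h_i$ on $\boldsymbol{\rho}_{-i}$, not on $\boldsymbol{\phi}_{-i}$, so ruling out ICA stalling at a spuriously suboptimal fixed point requires exploiting the single-cell-global hypothesis in a non-local way. This is precisely the step where the assumption that P2 is solved to optimum is indispensable: it excludes the spurious local minima that a purely MM-based oracle could introduce and lets the sandwich close to equality.
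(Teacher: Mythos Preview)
Your sandwich strategy has a real gap exactly where you flag it, and the paper closes the argument by a different route that sidesteps the cross-$\boldsymbol{\phi}_{-i}$ comparison altogether. The paper does not attempt to compare two ICA trajectories or two fixed points. Instead it observes that, once the single-cell problem is solved to global optimum, the relevant update map is
\[
G_i(\boldsymbol{\rho}_{-i})\;=\;\min_{\boldsymbol{\phi}_i\in\mathcal{D}} F_i(\boldsymbol{\rho}_{-i},\boldsymbol{\phi}_i),
\]
where $F_i(\boldsymbol{\rho}_{-i},\boldsymbol{\phi}_i)$ is the single-cell load at a fixed $\boldsymbol{\phi}_i$. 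The paper then shows $G_i$ is a \emph{standard interference function} in the sense of Yates: it is monotone in $\boldsymbol{\rho}_{-i}$ (your Lemma~\ref{lemma4}) and, crucially, \emph{scalable}, i.e.\ $\alpha\,G_i(\boldsymbol{\rho}_{-i})>G_i(\alpha\boldsymbol{\rho}_{-i})$ for all $\alpha>1$. Scalability of $F_i$ comes from its concavity in $\boldsymbol{\rho}_{-i}$ (via \cite{siomina2012analysis}), and the pointwise minimum over $\boldsymbol{\phi}_i$ preserves both monotonicity and scalability. Yates' theorem then gives that the fixed point of $\boldsymbol{\rho}=G(\boldsymbol{\rho})$ is \emph{unique} and that fixed-point iteration converges to it; uniqueness is precisely what your argument is missing and what rules out spurious fixed points without any $\boldsymbol{\phi}_{-i}$-dependent comparison.

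Your final contradiction sketch does not work as written: from $\rho_i^\infty>\rho_i^*$ for some $i$ you cannot obtain a strict improvement at $(\boldsymbol{\rho}^\infty,\boldsymbol{\phi}^\infty)$ by ``substituting $\boldsymbol{\phi}_{-i}^*$'', because $h_i$ depends on $\boldsymbol{\phi}_{-i}$ through the composite gains $\hat g_{kj}$, and Lemma~\ref{lemma4} gives no control over that dependence. Likewise, the companion trajectory starting at $(\boldsymbol{\rho}^*,\boldsymbol{\phi}^*)$ tells you only about \emph{its own} limit, not about $(\boldsymbol{\rho}^\infty,\boldsymbol{\phi}^\infty)$. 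The missing ingredient is scalability; once you have it, SIF uniqueness makes the sandwich, the companion trajectory, and the compactness argument all unnecessary.
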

\begin{proof}
First, we use $F_i(\boldsymbol{\rho}_{-i},\boldsymbol{\phi}_{i})$ to represent the optimal solution to the single-cell optimization problem under any given reflection coefficient vector $\boldsymbol{\phi}_i$, and it can be directly expressed by
\begin{align}
	F_i(\boldsymbol{\rho}_{-i},\boldsymbol{\phi}_{i}) = \sum_{j\in \mathcal{J}_i} \frac{d_j}{\log_2\left(1+ \text{SINR}_j^{\left \langle s \right \rangle} \left( \boldsymbol{\phi}_{i} \right)  \right)}.
\end{align}
By \cite[Theorem 2]{siomina2012analysis}, $F_i(\boldsymbol{\rho}_{-i},\boldsymbol{\phi}_{i})$ is strictly concave for $\boldsymbol{\rho}_{-i}$, and by \cite[Corollary 3]{siomina2012analysis}, we can prove easily it has the property of scalability, i.e., for any $\alpha>1$,
\begin{equation}\label{ine28}
	\alpha F_i(\boldsymbol{\rho}_{-i},\boldsymbol{\phi}_{i}) > F_i(\alpha \boldsymbol{\rho}_{-i},\boldsymbol{\phi}_{i}).
\end{equation}
In addition, $\text{SINR}_j^{\left \langle s \right \rangle} \left(\boldsymbol{\phi}_{i} \right)$ is clearly monotonically decreasing for $\boldsymbol{\rho}_{-i}$. Therefore, $F_i(\boldsymbol{\rho}_{-i},\boldsymbol{\phi}_{i})$ has {monotonicity}, i.e., for any $\boldsymbol{\rho}'_{-i} > \boldsymbol{\rho}_{-i}$,
\begin{equation}\label{ine39}
	F_i(\boldsymbol{\rho}'_{-i},\boldsymbol{\phi}_{i}) > F_i( \boldsymbol{\rho}_{-i},\boldsymbol{\phi}_{i}).
\end{equation}
These properties define the so called standard interference function (SIF) \cite{Yates}, hence, $F_i(\boldsymbol{\rho}_{-i},\boldsymbol{\phi}_{i})$ is an SIF.

Let $G_i(\boldsymbol{\rho}_{-i})$ be the optimal solution to the single-cell optimization problem P2, then we have
\begin{align}\label{fmin}
		G_i(\boldsymbol{\rho}_{-i}) &= F_i(\boldsymbol{\rho}_{-i},\boldsymbol{\phi}^*_{i}) = \underset{ \boldsymbol{\phi}_{i}}{\min} \ \rho_i \ \textup{s.t.}\  \text{(\ref{P2C1})}  \notag\\
		&= \min\left\{F_i(\boldsymbol{\rho}_{-i},\boldsymbol{\phi}_{i}) | \boldsymbol{\phi}_{i} \in \mathcal{D} \right\}
\end{align}
where $\boldsymbol{\phi}^*_{i}$ is the optimal reflection coefficients vector. We show $G_i(\boldsymbol{\rho}_{-i})$ is also an SIF, because it has scalability and monotonicity, and the proof is as follows.
\begin{itemize}
	 \item Scalability: \\
	 For any $\alpha>1$, let $\boldsymbol{\phi}^{\odot}_{i}$ be the optimal reflection coefficients vector for $F_i(\alpha\boldsymbol{\rho}_{-i},\boldsymbol{\phi}_{i})$, i.e., 
	 \begin{equation}\label{last1}
	 	G_i(\alpha \boldsymbol{\rho}_{-i}) = F_i(\alpha \boldsymbol{\rho}_{-i},\boldsymbol{\phi}^{\odot}_{i}) \leq F_i(\alpha \boldsymbol{\rho}_{-i},\boldsymbol{\phi}^{*}_{i}).
	 \end{equation}
	By (\ref{ine28}), 
	 \begin{equation}\label{last2}
	 	F_i(\alpha \boldsymbol{\rho}_{-i},\boldsymbol{\phi}^{*}_{i}) < \alpha F_i( \boldsymbol{\rho}_{-i},\boldsymbol{\phi}^{*}_{i}) = \alpha G_i( \boldsymbol{\rho}_{-i}).
	 \end{equation}
Therefore, by (\ref{last1}) and (\ref{last2}), we have
\begin{equation}
	G_i(\alpha \boldsymbol{\rho}_{-i}) < \alpha G_i( \boldsymbol{\rho}_{-i}).
\end{equation}

	\item Monotonicity: \\
	Since, for any $\boldsymbol{\phi}_{i} \in \mathcal{D}$ and $\boldsymbol{\rho}'_{-i} > \boldsymbol{\rho}_{-i}$, inequality (\ref{ine39}) holds, we have  
	\begin{equation}
		\min \{F_i(\boldsymbol{\rho}'_{-i},\boldsymbol{\phi}_{i}) | \boldsymbol{\phi}_{i} \in \mathcal{D} \} \! > \! \min\{F_i(\boldsymbol{\rho}_{-i},\boldsymbol{\phi}_{i}) | \boldsymbol{\phi}_{i} \in \mathcal{D} \}.
	\end{equation} 
	Therefore, for $\boldsymbol{\rho}'_{-i} > \boldsymbol{\rho}_{-i}$,
	\begin{equation}
		G_i(\boldsymbol{\rho}'_{-i}) > G_i(\boldsymbol{\rho}_{-i}),
	\end{equation}
\end{itemize}
Therefore, $G_i(\boldsymbol{\rho}_{-i})$ is an SIF, for which fixed-point iterations guarantees convergence and optimality \cite{Yates}. Hence the conclusion.
\end{proof}
  
 Theorem \ref{additional} implies that the multi-cell problem can be solved to global optimum if an optimal algorithm of single-cell is embedded into our algorithmic framework. The performance of the algorithmic framework remains satisfactory even though the embedded single-cell algorithm can only guarantee local optimality (e.g., Algorithm \ref{al0}). For small-scale scenarios with $\mathcal{D}_3$, the optimum of the single-cell problem can be found by exhaustive search, then the global optimum of the multi-cell problem also will be guaranteed by Theorem 10. With the global optimum provided, we will gauge the performance of the algorithmic framework embedded with Algorithm \ref{al0} in the next section.


\section{Performance Evaluation} \label{Sec:evaluation}

\subsection{Preliminaries}

We use a cellular network of seven cells, adopting a wrap-around technique. In each cell, ten UEs are randomly and uniformly distributed. The cells have the same number of RISs. All RISs have the same number of reflection elements. For each cell, its RISs are evenly distributed around the BS location, with a distance of $250$ m to the BS that is located in the center point of the cell. Each RIS has $M = 20$ reflection elements, and we use $S_i = | \mathcal{L}_i | \times M$ to denote the total number of reflection elements in cell $i$. The channel between cell $i$ and UE $j$ is given by $g_{ij} = D_{ij}^{-\alpha_{cu}} g_0$, where $D_{ij}$ is the distance between the BS and UE, $\alpha_{cu} = 3.5$ is the path loss exponent, and $g_0$ follows a Rayleigh distribution. Similarly, the channel from the BS of cell $i$ to RIS $l$ is given by $\boldsymbol{G}_{il} = D_{il}^{-\alpha_{ci}} \boldsymbol{G}_0$, and the channel from RIS $l$ to UE $j$ is given by $\boldsymbol{H}_{lj} = D_{lj}^{-\alpha_{iu}} \boldsymbol{H}_0$. Some additional simulation parameters are given in Table \ref{tab:test}.

\begin{table}[htbp]
 \caption{\label{tab:test}Simulation Parameters}
 \begin{center}
 \begin{tabular}{lll}
  \toprule
  \textbf{Parameter} & \textbf{Value}  \\
  \midrule
  Cell radius & $500$ m \\
  Carrier frequency & $2$ GHz \\
 Total bandwidth & $20$ MHz  \\
  RB power in each cell $P$ & $1$ W\\
  Noise power spectral density $\sigma ^2$ & $-174$ dBm/Hz  \\
 Convergence tolerance $\epsilon$ & $10^{-4}$ \\

  \bottomrule
 \end{tabular}
  \end{center}
\end{table}

\subsection{Benchmarks and Initialization}

In our simulation, we use three benchmark schemes as follows.
\subsubsection{No RIS} We take the system without RIS as a baseline scheme. For this scheme, we can compute the global optimum of resource minimization using the fixed-point method in \cite{siomina2012analysis}. 
\subsubsection{Random} In this scheme, the RIS reflection coefficients are randomly chosen from domain $\mathcal{D}_1$. Once the coefficients are set, the optimum subject to the chosen coefficient values can again be obtained by the fixed-point method.
\subsubsection{Decomposition} In this scheme, the RIS reflection coefficients of the cells are optimized independently, by treating inter-cell interference as zero or the worst-case value, namely Decomposition-1 and Decomposition-2, respectively. The resulting single-cell problem for domain $\mathcal{D}_1$ for a generic cell $i$ can be reformulated as follows.
\begin{subequations}
\begin{align}
\textup{[\rm P3]}\ \underset{ \boldsymbol{\phi}_{i}}{\min} \ \ &\sum_{j\in \mathcal{J}_i} \frac{d_j}{\log_2\left(1+ \frac{  | \hat{g}_{ij} + \sum_{l \in \mathcal{L}_i}\boldsymbol{\Lambda}_{ijl}\boldsymbol{\Phi}_l |^2{P_i} }{\Upsilon + \sigma^2}    \right)} \\
\textup{ s.t.}\ \ \ & \phi_{lm} \in \mathcal{D}_1, \forall m \in \mathcal{M},\forall l \in \mathcal{L}_i,
\end{align}
\end{subequations}
where given inter-cell interference $\Upsilon = 0$ or $\sum_{k\in\mathcal{I}, \atop k\neq i} \left | \hat{g}_{kj}+ \sum_{l \in \mathcal{L}_i}\boldsymbol{\Lambda}_{kjl}\boldsymbol{\Phi}_l\right |^2{P_k}$.
Following \cite{9039554}, by introducing a set of auxiliary variables $\boldsymbol{y}_i = [y_1, y_2, ..., y_{J_i} ]^T$, $\boldsymbol{a}_i = [a_1, a_2, ..., a_{J_i} ]^T$, and $\boldsymbol{b}_i = [b_1,b_2, ..., b_{J_i} ]^T$, problem P3 can be approximated by the following problem.
\begin{subequations}
\begin{align}
\textup{[\rm P3.1]}\underset{ \boldsymbol{\phi}_{i}, \boldsymbol{y}_i, \atop \boldsymbol{a}_i, \boldsymbol{b}_i}{\min} \ \ &\sum_{j\in \mathcal{J}_i} \frac{d_j}{\log_2\left(1+ \frac{  y_j{P_i} }{\sigma^2}    \right)} \label{P2.7obj}\\
\textup{ s.t.}\ \ \ & |\phi_{lm} |\leq 1, \forall m \in \mathcal{M},\forall l \in \mathcal{L}_i. \label{}\\
&a_j = \Re\left\{ \hat{g}_{ij}+ \sum_{l \in \mathcal{L}_i}\boldsymbol{\Lambda}_{ijl}\boldsymbol{\Phi}_l \right\}, \forall j\in \mathcal{J}_i\\
&b_j = \Im\left\{ \hat{g}_{ij}+ \sum_{l \in \mathcal{L}_i}\boldsymbol{\Lambda}_{ijl}\boldsymbol{\Phi}_l \right\}, \forall j\in \mathcal{J}_i\\
&y_j \leq \tilde{a}_j^2+\tilde{b}_j^2+2\tilde{a}_j({a}_j-\tilde{a}_j)+2\tilde{b}_j({b}_j-\tilde{b}_j), \notag \\ &\qquad\qquad\qquad\qquad\qquad\qquad\quad \forall j\in \mathcal{J}_i.
\end{align}
\end{subequations}

\begin{algorithm}[tbp]\label{al2}
\caption{Optimization for the decomposition scheme based on MM for cell $i$} 
\KwIn{ $\{d_j, g_{ij}, \boldsymbol{G}_{il}, \boldsymbol{H}_{lj}\}, \forall l \in \mathcal{L}_i, \forall j \!\in \mathcal{J}_i, \varepsilon$;} 
\KwOut{$\boldsymbol{\rho}$, $ \boldsymbol{\phi}$;}   
Initialize $ {\boldsymbol{\phi}}^{(0)}_i$\;
$t \leftarrow 0$\;
\Repeat
{{\rm (\ref{P2.7obj}) converges with respect to $\epsilon$}
}
{
$ \tilde{\boldsymbol{\phi}}_i \leftarrow {\boldsymbol{\phi}}^{(t)}_i$\;
$\tilde{a}_j \leftarrow \Re\left\{ \hat{g}_{ij}+ \sum_{l \in \mathcal{L}_i}\boldsymbol{\Lambda}_{ijl}\tilde{\boldsymbol{\Phi}}_l \right\}, \forall j\in \mathcal{J}_i$\;
$\tilde{b}_j \leftarrow \Im\left\{ \hat{g}_{ij}+ \sum_{l \in \mathcal{L}_i}\boldsymbol{\Lambda}_{ijl}\tilde{\boldsymbol{\Phi}}_l \right\}, \forall j\in \mathcal{J}_i$\;
Obtain ${\boldsymbol{\phi}}^{(t+1)}_i$ by solving P3.1\;
$t\leftarrow t+1$\;
}
\Return {$\boldsymbol{\phi}_i= \boldsymbol{\phi}_i^{(t)}$}\;
\end{algorithm}

P3.1 is a convex optimization problem that can be solved efficiently. An algorithm based on the MM method for problem P3 is provided in Algorithm \ref{al2}. After solving P3.1 for all individual cells to obtain the RIS reflection coefficients, the resulting optimum of P1 can be computed by the fixed-point method.

As for the value domain of RIS reflection coefficients, we consider $\mathcal{D}_1$, $ \mathcal{D}_2$, $\mathcal{D}_3$ (2-bit), and $\mathcal{D}_3$ (1-bit). Specifically, $\mathcal{D}_3$ (2-bit) represents the discrete phases with $N = 4$ and $\mathcal{D}_3$ (1-bit) that with $N = 2$. Additionally, in our simulation, we initialize all the reflection coefficients to be $\phi =  \mathrm{e}^{\mathrm{i} \pi}$. 

\subsection{Impact of Demand}

\begin{figure}[tbp]
\begin{center}
\begin{tikzpicture}
\begin{axis}[
    xlabel={Normalized demand $d$ (Mbps)},
    ylabel={Total load},
    xmin=0.4, xmax=0.8,
    ymin=0, ymax=6,
    xtick={0.4, 0.5, 0.6, 0.7, 0.8},
    ytick={0, 1, 2, 3, 4, 5, 6},
    legend pos=north west,
    grid style=densely dashed,
    tick label style={font=\footnotesize},
]

\addplot[ color=darkgray, mark=square, mark options={solid}, line width=0.8pt]
coordinates { (0.4, 2.00) (0.5, 2.70) (0.6, 3.47) (0.7, 4.32) (0.8, 5.25) };

\addplot[ color=darkgray, mark=o, densely dashed, mark options={solid}, line width=0.8pt]     
coordinates { (0.4, 1.94) (0.5, 2.62) (0.6, 3.40) (0.7, 4.20) (0.8, 5.14) };

\addplot[ color= teal, mark=x, densely dashed, mark options={solid}, line width=0.8pt]     
coordinates { (0.4, 1.11) (0.5, 1.8) (0.6, 2.60) (0.7, 3.50) (0.8, 4.5) };

\addplot[ color= teal, mark=square, mark options={solid}, line width=0.8pt]     
coordinates { (0.4, 1.85) (0.5, 2.25) (0.6, 2.78) (0.7, 3.30) (0.8, 3.78) };

\addplot[ color=red, mark=o, line width=0.8pt] 
coordinates { (0.4, 0.97) (0.5, 1.48) (0.6, 2.05) (0.7, 2.68) (0.8, 3.4) };

\addplot[ color=blue, mark=x, densely dashed, mark options={solid}, line width=0.8pt ]
coordinates { (0.4, 0.98) (0.5, 1.50) (0.6, 2.06) (0.7, 2.69) (0.8, 3.41) };

\addplot[ color=orange, mark=square, line width=0.8pt ]   
coordinates {(0.4, 1.03) (0.5, 1.54) (0.6, 2.12) (0.7, 2.77) (0.8, 3.52)};

\addplot[color=orange, mark=o, densely dashed, mark options={solid}, line width=0.8pt]
coordinates {(0.4, 1.17) (0.5, 1.73) (0.6, 2.36) (0.7, 3.07) (0.8, 3.87)};

\legend{No RIS, Random-$\mathcal{D}_1$, Decomposition-1-$\mathcal{D}_1$, Decomposition-2-$\mathcal{D}_1$, ICA-$ \mathcal{D}_1$, ICA-$\mathcal{D}_2$, ICA-$\mathcal{D}_3$ (2-bit), ICA-$\mathcal{D}_3$ (1-bit)}

\end{axis}
\end{tikzpicture}
\caption{Total load in function of normalized demand when $ S = 140$ and $ \alpha_{ci} = \alpha_{iu} = 2.2$.} \label{fig:1}
\end{center}
\end{figure}
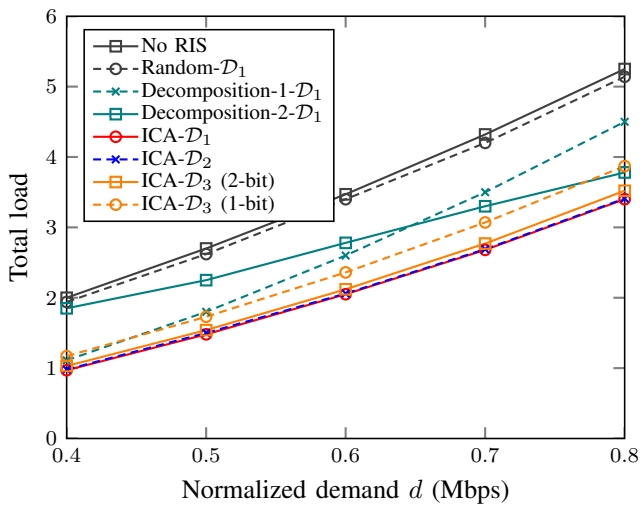

Fig. \ref{fig:1} illustrates the total load, i.e., resource consumption, with respect to the normalized demand $d$, when $ S_i = 140$ $(\forall i \in \mathcal{I})$ and $ \alpha_{ci} = \alpha_{iu} = 2.2$. It can be seen that the RIS provides little improvement if it is not optimized, as the difference between the performance of no RIS and that of the random scheme is very small. In contrast, ICA achieves significant performance gains compared with the three benchmark solutions. Specifically, ICA-$\mathcal{D}_1$ can save up to $51.3\%$ resource than no RIS when $d = 0.4$. Even with the most restrictive setup of ICA-$\mathcal{D}_3$ (1 bit), a $41.4\%$ reduction in resource usages is achieved. 

\subsection{Discussion for the coupling effects}

We have evaluated two decomposition schemes with given interference. In Fig. \ref{fig:1}, we can see that the performance of Decomposition-1 is close to that of ICA-$\mathcal{D}_1$ when $d \leq 0.4$, since small demand leads to fairly negligible inter-cell interference. As the demand increases, accounting for inter-cell interference becomes increasingly important, thus the difference between Decomposition-1 and ICA-$\mathcal{D}_1$ grows. For example, Decomposition-1 indicates $32.8\%$ more resource consumption compared with ICA-$\mathcal{D}_1$ when $d=0.8$. Conversely, the performance of Decomposition-2 is close to that of ICA-$\mathcal{D}_1$ when the demand is high, however, Decomposition-2 erroneously predicts $80\%$ more resource usage when $d = 0.4$. In summary, the results obtained by the two decomposition schemes both deviate significantly from those obtained when the coupling relation between cells is accounted for. The observation demonstrates the importance of capturing the dynamics due to load coupling in optimization.

\subsection{Impact of RIS}

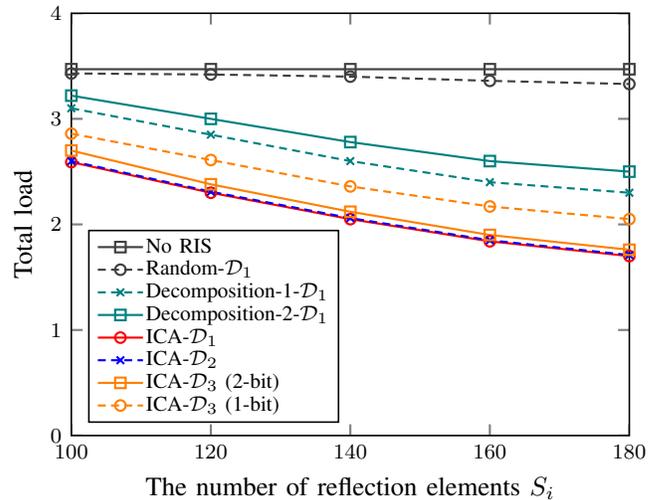
\begin{figure}[tbp]
\begin{center}
\begin{tikzpicture}
\begin{axis}[
    xlabel={The number of reflection elements $S_i$},
    ylabel={Total load},
    xmin=100, xmax=180,
    ymin=0, ymax=4,
    xtick={100, 120, 140, 160, 180},
    ytick={0, 1, 2, 3, 4},
    legend pos=south west,
    grid style=densely dashed,
    tick label style={font=\footnotesize},
]

\addplot[ color=darkgray, mark=square, mark options={solid}, line width=0.8pt]
coordinates { (100, 3.47) (120, 3.47) (140, 3.47) (160, 3.47) (180, 3.47) };

\addplot[ color=darkgray, mark=o, densely dashed, mark options={solid}, line width=0.8pt]     
coordinates { (100, 3.43) (120, 3.42) (140, 3.40) (160, 3.36) (180, 3.33) };

\addplot[ color= teal, mark=x, densely dashed, mark options={solid}, line width=0.8pt]     
coordinates { (100, 3.10) (120, 2.85) (140, 2.60) (160, 2.40) (180, 2.3) };

\addplot[ color= teal, mark=square, mark options={solid}, line width=0.8pt]     
coordinates { (100, 3.22) (120, 3.0) (140, 2.78) (160, 2.60) (180, 2.5) };

\addplot[ color=red, mark=o, line width=0.8pt] 
coordinates { (100, 2.59) (120, 2.30) (140, 2.05) (160, 1.84) (180, 1.7) };

\addplot[ color=blue, mark=x, densely dashed, mark options={solid}, line width=0.8pt ]
coordinates { (100, 2.60) (120, 2.31) (140, 2.06) (160, 1.85) (180, 1.71) };

\addplot[ color=orange, mark=square, line width=0.8pt ]   
coordinates {(100, 2.70) (120, 2.38) (140, 2.12) (160, 1.90) (180, 1.76)};

\addplot[color=orange, mark=o, densely dashed, mark options={solid}, line width=0.8pt]
coordinates {(100, 2.86) (120, 2.61) (140, 2.36) (160, 2.17) (180, 2.05)};

\legend{No RIS, Random-$\mathcal{D}_1$, Decomposition-1-$\mathcal{D}_1$, Decomposition-2-$\mathcal{D}_1$, ICA-$ \mathcal{D}_1$, ICA-$\mathcal{D}_2$, ICA-$\mathcal{D}_3$ (2-bit), ICA-$\mathcal{D}_3$ (1-bit)}

\end{axis}
\end{tikzpicture}
\caption{Total load in function of the number of reflection elements $ S_i$ $(\forall i \in \mathcal{I})$ when $d = 0.6$ Mbps, when $\alpha_{ci} = \alpha_{iu} = 2.2$.} \label{fig:2}
\end{center}
\end{figure}

Fig. \ref{fig:2} depicts the total load versus the number of reflection elements, when $d = 0.6$ Mbps and $ \alpha_{ci} = \alpha_{iu} = 2.2$. We can see that, with more reflected elements, the system can clearly benefit except for the random scheme. 

\begin{figure}[tbp]	
\begin{center}
\begin{tikzpicture}
\begin{axis}[
    xlabel={$\alpha_{RIS}$},
    ylabel={Total load},
    xmin=2, xmax=2.6,
    ymin=0, ymax=4,
    xtick={2, 2.2, 2.4, 2.6},
    ytick={0, 1, 2, 3, 4, 5, 6},
    legend pos=south east,
    grid style=densely dashed,
    tick label style={font=\footnotesize},
]

\addplot[ color=darkgray, mark=square, mark options={solid}, line width=0.8pt]
coordinates { (2, 3.47) (2.2, 3.47) (2.4, 3.47) (2.6, 3.47) };

\addplot[ color=darkgray, mark=o, densely dashed, mark options={solid}, line width=0.8pt]     
coordinates {  (2, 3.25) (2.2, 3.40) (2.4, 3.40) (2.6, 3.42) };

\addplot[ color= teal, mark=x, densely dashed, mark options={solid}, line width=0.8pt]     
coordinates {  (2, 2.15) (2.2, 2.60) (2.4, 2.85) (2.6, 3.00) };

\addplot[ color= teal, mark=square, mark options={solid}, line width=0.8pt]     
coordinates {  (2, 2.35) (2.2, 2.78) (2.4, 3.02) (2.6, 3.15) };

\addplot[ color=red, mark=o, line width=0.8pt] 
coordinates {  (2, 1.38) (2.2, 2.03) (2.4, 2.30) (2.6, 2.43) };

\addplot[ color=blue, mark=x, densely dashed, mark options={solid}, line width=0.8pt ]
coordinates {  (2, 1.39) (2.2, 2.04) (2.4, 2.32) (2.6, 2.455) };

\addplot[ color=orange, mark=square, line width=0.8pt ]   
coordinates { (2, 1.50) (2.2, 2.12) (2.4, 2.38) (2.6, 2.5)};

\addplot[color=orange, mark=o, densely dashed, mark options={solid}, line width=0.8pt]
coordinates { (2, 1.86) (2.2, 2.36) (2.4, 2.58) (2.6, 2.74)};

\legend{No RIS, Random-$\mathcal{D}_1$, Decomposition-1-$\mathcal{D}_1$, Decomposition-2-$\mathcal{D}_1$, ICA-$ \mathcal{D}_1$, ICA-$\mathcal{D}_2$, ICA-$\mathcal{D}_3$ (2-bit), ICA-$\mathcal{D}_3$ (1-bit)}

\end{axis}
\end{tikzpicture}
\caption{Total load in function of the path loss exponent $\alpha_{RIS} = \alpha_{ci} = \alpha_{iu}$, when $d = 0.6$ and $ S_i = 140$ $(\forall i \in \mathcal{I})$.} \label{fig:3}
\end{center}
\end{figure}
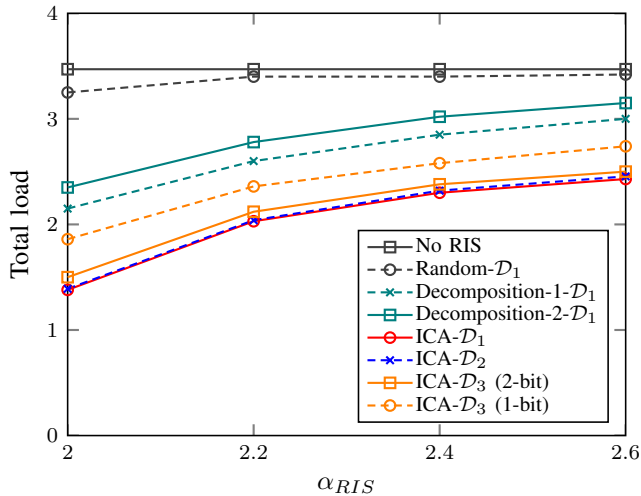

Fig. \ref{fig:3} shows the impact of the path loss exponent of RIS for the total load. As we can observe that the total load increases upon increasing $\alpha_{RIS}$, since larger $\alpha_{RIS}$ leads to the lower channel gain between RIS and BS/UE. The path loss exponent of RIS depends on the physical environment. For example, in practice $\alpha_{RIS}$ is usually smaller when the RIS is at a higher altitude due to fewer obstacles. But with an increasing height of the RIS position, the distance between the RIS and BS/UE also will increase, which leads to a larger path loss. 

We also observe from Fig. \ref{fig:1}-\ref{fig:3} that the curves of ICA-$\mathcal{D}_1$ and ICA-$\mathcal{D}_2$ almost overlap. It means that ICA loses little performance even though the amplitudes of the reflection coefficients are restricted to be one. In addition, for domain $\mathcal{D}_3$, we can see that, the performance of ICA-$\mathcal{D}_2$ is almost closed to the ICA-$\mathcal{D}_1$. It means that even RIS with low adjustability in practice can still help. 

\subsection{Discussion for Practical RIS}

\begin{figure}[tbp]
\begin{center}
\begin{tikzpicture}
\centering
\begin{axis}[
        ybar, axis on top,
        tick label style={font=\footnotesize},
        tick align=inside,
        xlabel={Normalized demand $d$ (Mbps)},
        ylabel={Total load},
        ymin=0, ymax=2,
        xtick={1, 2, 3, 4, 5},
        enlarge x limits=true,
       	legend style={at={(0.98, 0.98)},
	    legend pos=north west,
	    anchor=north west, 
    	legend columns=-1},
	legend image code/.code={\draw [#1] (0cm,-0.1cm) rectangle (0.2cm,0.25cm); },
]
    \addplot  coordinates { (1, 0.256) (2, 0.5715) (3, 0.9215) (4, 1.299) (5, 1.70) };
    \addplot  coordinates { (1, 0.246) (2, 0.56) (3, 0.9115) (4, 1.279) (5, 1.68) };
	\legend{ICA-$\mathcal{D}_3$ (2-bit), Optimal-$\mathcal{D}_3$ (2-bit)}
\end{axis}
\end{tikzpicture}
\caption{A comparison about the solution obtained by our algorithm and the optimal solution in a small scale simulation scenario under different normalized demand.} \label{fig:4}
\end{center}
\end{figure}

By Proposition \ref{additional}, for a small number of RIS elements, for the single-cell problem one can exhaustively consider the combinations of phase shifts for $\mathcal{D}_3$, and embedding this into the fixed-point method gives the global optimum for the multi-cell system. This allows us to gain insight of our algorithm in terms of performance with respect to global optimality. To this end, we consider a $3$-cell scenario where two UEs and total ten reflection elements in each cell. The results are shown in Fig. \ref{fig:4}, showing that algorithm ICA delivers solutions within at most 4\% gap to the global optimum for the small-scale scenario.

\subsection{Convergence Analysis}

\begin{figure}[tbp]
\begin{center}
\begin{tikzpicture}
\begin{semilogyaxis}[
    xlabel={Iteration $\tau$ in the ICA algorithm},
    ylabel={$||\boldsymbol{\rho}^{(\tau)}- \boldsymbol{\rho}^{(\tau-1)}||_{\infty}$},
    xmin=2, xmax=10,
    xtick={ 2, 3, 4, 5, 6, 7, 8, 9, 10},
    legend pos=north east,
    grid style=densely dashed,
    tick label style={font=\footnotesize},
]

\addplot[ color=darkgray, dashdotted, line width=0.8pt ] 
coordinates { (2, 1.5E-2) (10, 4E-9) };

\addplot[ color=red, dotted, line width=0.8pt ]
coordinates { (2, 3E-2) (10, 3E-10) };

\addplot[ color=blue, densely dashed, mark options={solid}, line width=0.8pt ]     
coordinates { (2, 2E-2) (10, 1E-10) };

\legend{  $d = 0.4\ S = 180$, $d = 0.4\ S = 100$, $d = 0.8\ S = 100$}

\end{semilogyaxis}
\end{tikzpicture}
\caption{This figure illustrates the norm $||\cdot||_{\infty}$ in function of iteration $\tau$ in the ICA algorithm under three scenarios.} \label{fig:5}
\end{center}
\end{figure}
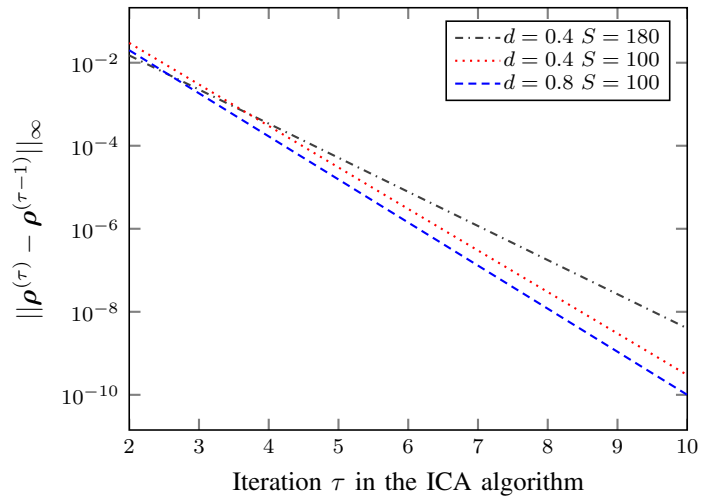

We show the convergence behavior of ICA in Fig. \ref{fig:5}. The convergence is consistently very fast. Note that the convergence becomes slightly slower for smaller demand. The reason is that smaller demand require a lower load in the final solution, and as a result, the difference between the initial one and the final load becomes larger. Similar reasoning applies to the convergence rate for larger $S$.


\section{Conclusion} \label{Sec:conclusion}

In this paper, we have investigated the resource minimization problem with interference coupling subject to the user demand requirement in multi-RIS-assisted multi-cell systems. An algorithmic framework has been designed to obtain a locally optimal solution. The numerical results have shown that the RIS can enhance the multi-cell system performance significantly. Additionally, even RIS with few discrete adjustable phases can achieve a good performance. In fact, RIS-assisted multi-cell scenarios are complex, and system simulation of large-scale scenario is time-consuming. Therefore, it would be desirable to explore model-based approaches, that are able to reasonably performance. In this paper, we chooese the load coupling model to capture the system dynamics, namely the interference dependency between interference and resource consumption. Although we acknowledge that the method is not exact, we hope one work represents the one step towards the multi-cell interference-coupled scenarios with RIS.


\section*{Appendix}
   
To prove Proposition \ref{NP-hard}, we use a polynomial-time reduction from the 3-satisfiability (3-SAT) problem to P2. The 3-SAT problem is NP-complete \cite{karp1972reducibility}. All notations we define in this appendix are used for this proof only.

Consider a 3-SAT instance with $n$ Boolean variables $x_1, x_2, ..., x_n$ and $m$ clauses. Denote by $\widehat{x}_i$ the negation of $x_i$. A literal means a variables or its negation. Each clause is composed by a disjunction of exactly three distinct literals, for example, $(\widehat{x}_1 \vee x_2 \vee \widehat{x}_n)$. The problem is to determine whether or not there exists an assignment of true/false values to the variables, such that all clauses are true. 

For any instance of 3-SAT with $n$ clauses and $m$ variables, we construct a corresponding network with a single cell, $n$ UEs, and $m$ RISs each having one reflection element, with domain $\mathcal{D}_3= \{  \mathrm{e}^{-\mathrm{i}\frac{\lambda}{4c}}, \mathrm{e}^{\mathrm{i}\frac{\lambda}{4c}}\}$, where $\lambda $ is the wavelength of the carrier, and $c$ is the speed of light. Each variable of the 3-SAT instance is mapped to an RIS. Moreover, the assignment of true/false to $x_i$ corresponds to setting the reflection coefficient of RIS $i$ to $\mathrm{e}^{\mathrm{i}\frac{\lambda}{4c}}$ and $\mathrm{e}^{-\mathrm{i}\frac{\lambda}{4c}}$, respectively. There is a one-to-one correspondence between the clauses in the 3-SAT instance and the UEs. It is not difficult to see that the parameters of our  P2 scenario can be set such that the following hold.

\begin{enumerate}
	\item The phase of the direct signal from the BS to all UEs is always at a crest. 
	\item For any UE, there are three RISs that potentially may contribute to the signal; the other RISs will have no effect on the UE (e.g., due to obstacle). In addition, For UE $j$, $j=1, 2, ... , m$, these three RISs are those corresponding to the three literals in clause $j$ of the 3-SAT instance. Specifically, the literal $x_i$ in clause $j$ corresponds the case where the phase of the reflected signal from RIS $i$ to UE $j$ will be at a crest if the reflection coefficient $ \mathrm{e}^{-\mathrm{i}\frac{\lambda}{4c}}$ is chosen, and that will be at a trough for $ \mathrm{e}^{\mathrm{i}\frac{\lambda}{4c}}$. Conversely, the literal $\widehat{x}_i$ in the clause corresponds the case where the phase will be at a trough if $ \mathrm{e}^{-\mathrm{i}\frac{\lambda}{4c}}$, and that will be at a crest if $ \mathrm{e}^{\mathrm{i}\frac{\lambda}{4c}}$. 
	\item The demand of UEs is set such that it can not be satisfied if and only if the power of the overall composite signal is zero. For any of three reflected signals received by UEs, the received power is one-third of the received power of the direct signal from the BS. By 1), we can know that, for any UE, the power of the overall composite signal will be zero (i.e., the demand can not be satisfied) if and only if its three reflected signals are all at a trough.
\end{enumerate}
Obviously, solving an instance of 3-SAT problem corresponds to determining if the defined P2 scenario has any feasible solution or not. Hence the conclusion.


%

\bibliographystyle{IEEEtran}
\bibliography{mybibtex}

\begin{IEEEbiographynophoto}{Zhanwei Yu} (Student Member, IEEE) received the B.Sc. in computer science and the M.Sc degrees in computer engineering from the Zhejiang University of Technology, Hangzhou, China, in 2016 and 2019, respectively. He is currently pursing the Ph.D degree with Department of Information Technology, Uppsala University, Sweden. His research interests include network optimization of B5G and 6G systems. 
\end{IEEEbiographynophoto}

\begin{IEEEbiographynophoto}{Di Yuan} (Senior Member, IEEE) received the M.Sc. degree in computer science and engineering and the Ph.D. degree in optimization from the Linköping Institute of Technology in 1996 and 2001, respectively. He has been an Associate Professor and then a Full Professor with the Department of Science and Technology, Linköping University, Sweden. In 2016, he joined Uppsala University, Sweden, as a Chair Professor. He has been a Guest Professor with the Technical University of Milan (Politecnico di Milano), Italy, since 2008, and a Senior Visiting Scientist with Ranplan Wireless Network Design Ltd., U.K., in 2009 and 2012. From 2011 and 2013, he was part time with Ericsson Research, Sweden. From 2014 and 2015, he was a Visiting Professor with the University of Maryland, College Park, MD, USA. He has been with the management committee of four European Cooperation in field of scientific and technical research (COST) actions, an Invited Lecturer of the European Network of Excellence EuroNF, and the Principal Investigator of several European FP7 and Horizon 2020 projects. His current research mainly addresses network optimization of 4G and 5G systems, and capacity optimization of wireless networks. He was a co-recipient of the IEEE ICC’12 Best Paper Award, and the Supervisor of the Best Student Journal Paper Award by the IEEE Sweden Joint VT-COM-IT Chapter in 2014. He is an Area Editor of {\em Computer Networks}.
	
\end{IEEEbiographynophoto}

\end{document}